\definecolor{myblue}{rgb}{0.2,0.2,0.8}
\definecolor{myblack}{rgb}{0,0,0}
\definecolor{myurl}{rgb}{0.1,0.1,0.4}
\edef\restoreparindent{\parindent=\the\parindent\relax}
\newtheorem{definition}{Definition}
\newcommand{\defref}[1]{Definition~\ref{#1}}
\newtheorem{lemma}{Lemma}
\numberwithin{lemma}{section}
\newcommand{\lemref}[1]{Lemma~\ref{#1}}
\numberwithin{prop}{section}
\newtheorem{corollary}{Corollary}
\numberwithin{corollary}{section}
\newcommand{\corref}[1]{Corollary~\ref{#1}}
\newtheorem{remark}{Remark}
\numberwithin{remark}{section}
\newcommand{\remref}[1]{Remark~\ref{#1}}
\numberwithin{example}{section}
\newtheorem{theorem}{Theorem}
\numberwithin{theorem}{section}
\newcommand{\thmref}[1]{Theorem~\ref{#1}}
\numberwithin{assumption}{section}
\newcommand{\<}{\langle}
\renewcommand{\>}{\rangle}
\newcommand{\rr}{{\mathcal{R}}}
\newcommand{\lo}{{\mathcal{L}}}
\newcommand{\cc}{{\mathcal{C}}}
\newcommand{\s}{{\mathcal{S}}}
\newcommand{\ee}{{\mathcal{E}}}
\newcommand{\xx}{{\mathcal{X}}}
\renewcommand{\aa}{{\mathcal{A}}}
\newcommand{\bb}{{\mathcal{B}}}
\newcommand{\mm}{{\mathcal{M}}}
\newcommand{\ii}{{\mathcal{I}}}
\newcommand{\jj}{{\mathcal{J}}}
\newcommand{\co}{\mathds{C}}
\newcommand{\re}{\mathds{R}}
\newcommand{\h}{{\mathcal{H}}}
\newcommand{\kk}{{\mathcal{K}}}
\newcommand{\hs}{{\mathcal{H}\sub{\s}}}
\newcommand{\ha}{{\mathcal{H}\sub{\aa}}}
\newcommand{\hr}{{\mathcal{H}\sub{\rr}}}
\newcommand{\hsys}{{H\sub{\s}}}
\newcommand{\E}{\mathsf{E}}
\newcommand{\Z}{\mathsf{Z}}
\newcommand{\one}{\mathds{1}}
\newcommand{\onesys}{\mathds{1}\sub{\s}}
\newcommand{\oneapp}{\mathds{1}\sub{\aa}}
\newcommand{\idch}{\operatorname{id}}
\newcommand{\idchsys}{\operatorname{id}\sub{\s}}
\newcommand{\zero}{\mathds{O}}
\newcommand{\tr}{\mathrm{tr}}
\newcommand{\tra}{\mathrm{tr}\sub{\aa}}
\newcommand{\trs}{\mathrm{tr}\sub{\s}}
\newcommand{\GO}{\operatorname{GLO}}
\newcommand{\sub}[1]{_{\!\mathsmaller{\, #1}}}
\newcommand{\eq}[1]{Eq.~\eqref{#1}}
\newcommand{\fig}[1]{Fig.~\ref{#1}}
\newcommand{\ket}[1]{|{#1}\rangle}
\newcommand{\rank}[1]{\mathrm{rank}\left( {#1}\right)}
\newcommand{\proj}[1]{\ensuremath{\left|#1\right\rangle\!\!\left\langle#1\right|}}
\newcommand*\bigcdot{\mathpalette\bigcdot@{.5}}
\newcommand*\bigcdot@[2]{\mathbin{\vcenter{\hbox{\scalebox{#2}{$\m@th#1\bullet$}}}}}
\begin{document}

\title{
Thermodynamic closure of quantum measurements\\ and the limits of the indirect measurement model
}

\author{M. Hamed Mohammady}
\email{m.hamed.mohammady@savba.sk}
\affiliation{RCQI, Institute of Physics, Slovak Academy of Sciences, D\'ubravsk\'a cesta 9, Bratislava 84511, Slovakia}
\author{Francesco Buscemi}
\email{buscemi@nagoya-u.jp}
\affiliation{Graduate School of Informatics, Nagoya University, Furo-cho, Chikusa-Ku, Nagoya 464-8601, Japan}

\begin{abstract}
We investigate the consequences of requiring that a quantum measurement admit an adiabatic enclosure, so that it can be assigned a genuine thermodynamic description in which energy exchange is meaningfully resolved into work and heat. We identify two inequivalent levels at which such thermodynamic closure can be imposed: at the level of the measurement instrument acting on the system, or at the level of the indirect measurement process realising the instrument, namely the interaction between the system and a measuring apparatus, possibly including arbitrarily large environments. Although every instrument admits a unitary dilation, we show that such a construction does not generally provide an admissible thermodynamic closure.

This distinction is especially dramatic for \textit{efficient measurements}, i.e., measurements that are completely purity-preserving and described by a single Kraus operator, including von Neumann--L\"uders and square-root state-update rules. While efficient measurements are compatible with the laws of thermodynamics when thermodynamic closure is imposed at the level of the instrument, they are categorically forbidden when closure is imposed at the level of the measurement process. Our results therefore reveal a fundamental tension between thermodynamics and the universal applicability of the unitary interaction-based indirect measurement model.

\end{abstract}

\maketitle

\section{Introduction: measurements as thermodynamic processes}

Maxwell's demon is often invoked to illustrate how measurement may appear to challenge the laws of thermodynamics. Here we take the opposite route. Rather than asking how measurement might evade thermodynamic constraints, we ask what follows if those constraints are taken to apply universally, including to measurement itself. In this work, we adopt the premise that a quantum measurement admits a genuine thermodynamic description only if it is \textit{thermodynamically closable}, namely, only if it can be embedded into a larger process enclosed by an adiabatic boundary. In the spirit of Callen, adiabatic walls provide the operational prerequisite for the measurability of energy, and hence for resolving total energy exchange into work and heat. As he emphasizes~\cite{callen1991thermodynamics},
\begin{quote}
``An essential prerequisite for the measurability of the energy is the
existence of walls that do not permit the transfer of energy in the form of
heat.''
\end{quote}
This is the sense in which we understand thermodynamic closure in the present work. Thus, the central question for us is not whether a thermodynamic boundary can be drawn, but \textit{where} to draw it\footnote{This also seems to be related to the famous \textit{measurement problem}, i.e. how a microscopic interaction manifests itself in a definite macroscopic result. We will not discuss this question here.}.


The difficulty is that thermodynamic closure is not fixed by the usual mathematical description of measurement. Note that the \textit{unitary interaction measurement model} based on the concept of unitary dilation~\cite{Von-Neumann-Foundations,Neu40,Stinespring1955,gordon-louisell,RieszNagyFunct,Ozawa1984}, which states that any measurement can always be seen as a unitary interaction with an apparatus (or probe) initialized in a pure state, followed by a von Neumann (sharp, projective) measurement on the apparatus, does not resolve the issue, since it provides a \textit{purification}---i.e., a sort of \textit{information-theoretic closure}---of the initial measurement, but at the cost of shifting the actual measurement down the line, from the system to the apparatus. Moreover, it is not clear how the concept of information-theoretic closure relates to the concept of \textit{thermodynamic} closure. So the question remains: is there a unitary interaction measurement model that can be considered adiabatic from a thermodynamic point of view? Here we face a dilemma: either thermodynamic closure can be imposed at some level of description, or one must give up formulating a thermodynamics for quantum measurements. 

To address this, we \textit{assume} that quantum measurements are thermodynamically closable and, under this assumption, investigate how thermodynamic laws, especially the second and the third, impose constraints on both the measurement and its physical implementation. We find that such thermodynamic constraints are especially problematic for \textit{purity-preserving} measurements, also known as \textit{quasicomplete} measurements~\cite{Ozawa1986}---those in which all pure initial states of the system are mapped to pure final states, for any measurement outcome. Among these, particularly relevant are \textit{completely} purity-preserving measurements, also known as \textit{efficient} measurements~\cite{Jacobs2009}; in this case, each measurement outcome corresponds to a single Kraus operator, so that they preserve purity even when performed locally on a pure entangled state. Efficient measurements, which contain as a special case von 
Neumann--L\"uders and square-root measurements, are known for their nice mathematical properties and are frequently assumed, either explicitly or implicitly, in several works in quantum information theory and quantum thermodynamics. Some textbooks even present efficient measurements as the most general measurement model, referred to as the \textit{measurement operator formalism}~\cite{Nielsen-Chuang}. Our results reveal significant problems with this assumption, \textit{especially} in a thermodynamic context, where efficient measurements may be fundamentally unattainable~\cite{Guryanova2018, Mohammady2022a}.

\subsection{Summary of the main results}

In this work, we focus on  \emph{adiabatic} implementations of quantum measurements---either as  instruments acting on the measured system or as indirect measurement processes realised via interactions of the system with a measurement apparatus---and examine whether such implementations are consistent with the second and third laws of thermodynamics. After introducing the necessary notation to describe quantum measurements in full generality, we establish the following results:

\begin{enumerate}

\item If thermodynamic closure is carried out at the level of the \textit{instrument}, so that during  measurement  the system being measured does not exchange heat with the external environment, then \thmref{thm:efficient-instrument-both-laws} shows that such an adiabatically implemented instrument is thermodynamically consistent if and only if the Shannon entropy of the outcome probability distribution exceeds the Groenewold-Lindblad-Ozawa information gain, and at least one operation of the instrument is strictly positive, i.e., it maps all full-rank states to full-rank states. As a direct consequence, an observable admits an efficient measurement if and only if at least one effect in its range is strictly positive. That is, while projective von Neumann--L\"uders measurements of (non-trivial) observables are categorically forbidden as fundamentally incompatible with thermodynamics, square-root (i.e., generalized L\"uders) measurements of some unsharp POVMs are allowed.

\item   If thermodynamic closure is carried out at the level of the \textit{measurement process}, so that no heat is exchanged with the external universe during the state preparation of the apparatus,  the premeasurement interaction between system and apparatus, and  the pointer objectification mechanism acting on the apparatus, then the measurement process is thermodynamically consistent if and only if the premeasurement interaction is a bistochastic channel---for example, a unitary channel---and the measuring apparatus is prepared in a full-rank state. \thmref{thm:bistochastic-third-law-purity-preserving-nogo} shows that for any non-trivial observable, a thermodynamically consistent adiabatically implemented measurement process cannot result in a quasicomplete measurement, let alone an efficient one.

\end{enumerate}

\section{Preliminaries}

\subsection{Operations and channels}

Here we  consider only systems with complex Hilbert spaces $\h$ of finite dimension. Let $\lo(\h)$ be the algebra of linear operators on $\h$, with the symbols $\one$ and $\zero$ denoting the unit and null operators in $\lo(\h)$, respectively; an operator $E \in \lo(\h)$ such that $\zero \leqslant E \leqslant \one$ is called an \textit{effect}.  An effect is called \textit{trivial} whenever it is proportional to $\one$, i.e., $E = \alpha \one$ for some $\alpha \in [0,1]$. A \textit{projection} is an effect satisfying $E^2 = E$. A \textit{state} on $\h$ is defined as a positive semidefinite operator of unit trace, and the space of states  on $\h$ is denoted as $\s(\h) \subset \lo(\h)$. The extremal elements of $\s(\h)$ are the pure states, which are  rank-1 projections. For any state $\rho\sub{\h \kk}$ on a bipartite system $\h \otimes \kk$, we define $\rho\sub{\h} \coloneq \tr\sub{\kk}[\rho\sub{\h \kk}]$ and $\rho\sub{\kk} \coloneq \tr\sub{\h}[\rho\sub{\h \kk}]$ the reduced states of $\rho\sub{\h\kk}$ in $\h$ and $\kk$, respectively.  A self-adjoint operator $A \in \lo(\h)$ is called positive definite, or \textit{strictly positive}, if $A >\zero$, i.e., if all the eigenvalues of $A$ are strictly positive. If $A$ is strictly positive then it has full rank in $\h$, i.e., $\rank{A} = \dim(\h)$.

A  linear map $\Phi : \lo(\h) \to \lo(\kk)$ is called an \textit{operation} if it is completely positive and trace non-increasing. When $\kk = \h$, we say that the operation acts on $\h$.   A trace preserving operation is called a \textit{channel}. Consider the pair of operations $\Phi_1: \lo(\h_1) \to \lo(\kk_1)$ and $\Phi_2 : \lo(\h_2) \to \lo(\kk_2)$. The parallel composition of these operations is $\Phi_1 \otimes \Phi_2 : \lo(\h_1 \otimes \h_2) \to \lo(\kk_1 \otimes \kk_2), A \otimes B \mapsto \Phi_1(A) \otimes \Phi_2(B)$.  If $\kk_1 = \h_2$, the sequential composition is  $\Phi_2 \circ \Phi_1 : \lo(\h_1) \to \lo(\kk_2), A \mapsto  \Phi_2[\Phi_1(A)]$. The \textit{identity channel}, which maps every operator to itself, is denoted as $\idch$. For each operation $\Phi : \lo(\h) \to \lo(\kk)$ there exists a unique dual map $\Phi^* : \lo(\kk) \to \lo(\h)$ defined by the trace duality  $\tr[\Phi^*(A) B] = \tr[A \Phi(B)]$ for all $A \in \lo(\kk),B \in \lo(\h)$. An operation is \textit{compatible} with a unique effect $E \in \lo(\h)$ via the relation $\Phi^*(\one\sub{\kk}) = E$. If $\Phi$ is a channel, then $\Phi^*$ is \textit{unital}, i.e., $\Phi$ is compatible with the trivial effect $\one\sub{\h}$.

An operation $\Phi$ is called \textit{purity-preserving} (or just \textit{pure}) if $\rho$ pure $\implies$ $\Phi(\rho)$ pure; it is \textit{completely} purity-preserving if $\idch\otimes\Phi$ is purity-preserving on any extension $\h'\otimes\h$ of $\h$. An operation $\Phi$ is called \textit{strictly positive} if $ A >\zero \implies \Phi(A) > \zero$.  An operation $\Phi$ acting on $\h$ is called a \textit{bistochastic channel} if it preserves both the trace and the unit; this is possible if and only if its dual $\Phi^*$ is also bistochastic, i.e., $\Phi^*$ is not only unital, but it also preserves the trace. It is easy to verify that bistochastic channels are rank non-decreasing, and hence strictly positive; see, e.g.~\cite{Buscemi2005}.

\subsection{Quantum instruments and measurement processes}

Let us consider a quantum system $\mathcal{S}$ associated with a finite-dimensional Hilbert space $\hs$. An observable on $\hs$ is represented by a normalized \textit{positive operator-valued measure} (POVM). We  consider only discrete observables, which are identified with the family $\E\coloneq  \{E_x : x \in \xx\}$, where $\xx = \{x_1, \dots, x_N\}$ is a (finite) alphabet (also called value space or the space of measurement outcomes) and  $E_x$ are effects in $\lo(\hs)$, normalized so that $\sum_{x\in \xx} E_x = \onesys$. The probability of observing outcome $x$ when measuring $\E$ in the state $\rho$ is given by the Born rule as $p^{\E}_\rho(x) \coloneq  \tr[E_x \rho]$. An observable is \textit{non-trivial} if at least one effect in its range is non-trivial, which implies that $|\xx| = N$ must be larger than one. An observable is  a \textit{projection valued measure} (PVM), or projective,   if $E_x$ are mutually orthogonal projections, i.e., $E_x E_y = \delta_{x,y} E_x$.  We restrict ourselves only to observables such that $E_x \ne \zero$ for all $x$: this is always possible by replacing the original value space $\xx$ with the relative complement $\xx \backslash \{x: E_x = \zero\}$.

The state-update rule for measurement of a (discrete) POVM is described by an (discrete) \textit{instrument} acting on $\hs$, which is given by a family of operations $\ii \coloneq  \{\ii_x : x\in \xx\}$ acting on $\hs$, normalized so that the average expectation $\ii_\xx(\bigcdot ) \coloneq  \sum_{x\in \xx} \ii_x(\bigcdot )$ is a channel. Each instrument is associated with a unique observable $\E$ via $\ii_x^*(\onesys) = E_x$. In this case, we say that the instrument $\ii$ is $\E$-\textit{compatible}, or that it is an $\E$-instrument for short. While every instrument is compatible with a unique observable, every observable admits infinitely many instruments; all $\E$-instruments can be written as $\ii_x(\bigcdot) \coloneq \Phi_x \circ \ii^L_x(\bigcdot)$ where $\ii^L_x(\bigcdot) \coloneq \sqrt{E_x} (\bigcdot) \sqrt{E_x}$ is the generalised L\"uders (or square-root) instrument for $\E$ and $\Phi_x$ are arbitrary channels that may depend on outcome $x$.  It is customary to represent explicitly also the classical register $\rr$ in which the measurement outcomes are stored. In order to unify the notation, the register, although intrinsically classical, is also represented by a Hilbert space $\h\sub{\rr}$, and the outcomes $x$ of  measurement are  stored in the register as the corresponding element of the orthogonal set of unit vectors $\{\ket{x} \in \h\sub{\rr}\}$. The ``augmented'' instrument channel $\Xi^\ii : \lo(\hs ) \to \lo(\hs \otimes \hr) $ is defined as 
\begin{align}\label{eq:register-instrument-channel}
    \Xi^\ii(\bigcdot\sub{\s} ) \coloneq \sum_{x\in \xx} \ii_x(\bigcdot\sub{\s}) \otimes \proj{x}\sub{\rr} \, .
\end{align}
This channel packages into a single quantum-classical output the average post-measurement state produced by the instrument: for each outcome $x$, the corresponding operation $\ii_x$ determines the conditional output state of the system, while the projector $\proj{x}\sub{\rr}$ records that outcome in the register. Thus, for any prior state $\rho\sub{\s}$, $\Xi^\ii(\rho\sub{\s})$ is precisely the expected joint state of system and register after the measurement, before conditioning on a particular outcome. Note that $\tr\sub{\rr} \circ \Xi^\ii(\bigcdot) = \ii_\xx(\bigcdot)$.

For any prior state $\rho\sub{\s}$, the expected posterior state of system and register after measurement by $\ii$ reads
\begin{align}\label{eq:avg-posterior-state}
    \Xi^\ii(\rho\sub{\s}) = \sum_{x\in \xx} p^\E_\rho(x)  \sigma\sub{\s}^x \otimes \proj{x}\sub{\rr}\, ,
\end{align}
where 
\begin{align}\label{eq:posterior-states}
\sigma\sub{\s}^x \coloneq  \frac{1}{p^{\E}_\rho(x)}\ii_x(\rho\sub{\s})\;,\qquad p^{\E}_\rho(x)  >0\;,
\end{align}
are the posterior  states of the system conditional on observing outcome $x$. In the above, if $p^{\E}_\rho(x)  =0$, the posterior  state can be defined arbitrarily.





\

\begin{definition}
An instrument $\ii\coloneq \{\ii_x : x\in \xx\}$  is called \emph{quasicomplete} if all operations $\ii_x$ are purity-preserving~\cite{Ozawa1986}. A subclass of quasicomplete instruments are called \emph{efficient}, i.e., instruments for which every operation is completely purity-preserving. This is equivalent to saying that every operation of an efficient instrument is expressible with a single Kraus operator, i.e.,  $\ii_x(\bigcdot ) = U_x \sqrt{E_x} \, \bigcdot \, \sqrt{E_x} U_x^*$, where $E_x$ are the effects compatible with each operation $\ii_x$ and $U_x$ are unitary operators.   
\end{definition}

\

While an instrument is sufficient for describing the state update of a system being measured, it does not take account of the role played by the measuring apparatus; a  \textit{measurement process} for an instrument is a specification of how the quantum system to be measured interacts with a given (quantum) measurement apparatus. Let  $\aa$  be an auxiliary system (the \textit{apparatus}) with Hilbert space $\ha$ which, since we only consider finite-dimensional systems $\hs$ in this paper, can always be taken finite dimensional; let $\xi$ be a state on $\ha$; let $\ee$ be a \textit{premeasurement channel} acting on $\hs\otimes \ha$; and let $\jj$ be an \textit{objectification} instrument acting on $\ha$ that is compatible with a \textit{pointer observable} $\Z\coloneq \{ Z_x : x \in \xx\}$, which can be a general POVM.   The tuple $\mm\coloneq  (\ha, \xi\sub{\aa}, \ee\sub{\s\aa}, \jj\sub{\aa})$ is a measurement process for a unique $\E$-compatible instrument $\ii$ acting on $\hs$, with the same value space $\xx$ as that of the pointer observable,   via the relations
\begin{align}
   & \ii_x(\bigcdot \sub{\s}) \coloneq  \Big[\tra\circ (\idchsys \otimes \jj_x) \circ \ee\sub{\s\aa}\Big](\bigcdot \sub{\s} \otimes \xi\sub{\aa}) \equiv \tra[(\onesys \otimes  Z_x)\ \ee\sub{\s\aa}(\bigcdot \sub{\s} \otimes \xi\sub{\aa})]\;,\nonumber\\
   &\ii_x^*(\bigcdot \sub{\s}) \coloneq \tra[\ee^*\sub{\s\aa}(\bigcdot \sub{\s} \otimes  Z_x)\ (\one\sub{\s} \otimes \xi\sub{\aa})] \equiv [\Gamma_\xi\circ \ee^*\sub{\s\aa}](\bigcdot \sub{\s} \otimes  Z_x)\; ,\label{eq:system-instrument}
\end{align}
for all $x \in \xx$. In the above,  $\Gamma_\xi : \lo(\hs\otimes \ha) \to \lo(\hs)$ is a conditional expectation with respect to $\xi\sub{\aa}$, also called a \textit{restriction map}, defined as
\begin{align}\label{eq:restriction-map}
\Gamma_\xi(\bigcdot \sub{\s\aa}) \coloneq  \tra[\bigcdot \sub{\s\aa}\ (\one\sub{\s} \otimes \xi\sub{\aa})]\;,
\end{align}
so that $\tr[\Gamma_\xi(\bigcdot \sub{\s\aa})\ \rho\sub{\s}] \coloneq  \tr[\bigcdot \sub{\s\aa}\ (\rho\sub{\s} \otimes \xi\sub{\aa})]$ for all states $\rho\sub{\s}$,  by construction. Such a map is obviously unital and completely positive.  Our definition generalizes that given in~\cite{Ozawa1984}, where $\ee$ is unitary,  $\xi$ is pure, and $\Z$ is projection valued. Since the instrument realised on the system  depends only on the pointer observable $\Z\sub{\aa}$ but not on the instrument $\jj\sub{\aa}$ used to measure it, see \eq{eq:system-instrument},  any apparatus instrument $\jj\sub{\aa}$ compatible with the same pointer observable $\Z\sub{\aa}$ realises the same instrument on the system, all else being equal. However, it is important to emphasize that different realisations $\jj\sub{\aa}$ of the same pointer observable $\Z\sub{\aa}$ may well have different physical or thermodynamic properties. This observation will be important in what follows.

Just as an instrument on the system can be represented by the augmented channel $\Xi^\ii$, a measurement process naturally defines an instrument on the joint system $\hs \otimes \ha$, with the apparatus now treated as part of the quantum output prior to conditioning on the observed outcome. Specifically, the measurement process $\mm$ induces the ``global'' instrument
\begin{align}\label{eq:global-instrument}
 (\idchsys \otimes \jj) \circ \ee \coloneq \{(\idchsys \otimes \jj_x) \circ \ee : x\in \xx\}  \, ,
\end{align}
acting on $\hs \otimes \ha$, which is compatible with the ``Heisenberg-evolved'' pointer observable $\Z^\tau \coloneq \{Z_x^\tau \equiv \ee^*(\onesys \otimes Z_x) : x \in \xx \}$. By \eq{eq:register-instrument-channel}, and in direct analogy with \eq{eq:avg-posterior-state}, the corresponding augmented channel describes the expected joint post-measurement state of system, apparatus, and register. Thus, for any prior state $\rho\sub{\s}$ of the system and initial apparatus state $\xi\sub{\aa}$, we have
\begin{align*}
  \Xi^{(\idchsys \otimes \jj) \circ \ee}(\rho\sub{\s} \otimes \xi\sub{\aa}) & = \sum_{x\in \xx} p^{\E}_\rho(x) \sigma\sub{\s\aa}^x \otimes |x\rangle\!\langle x|\sub{\rr}\;.
\end{align*}



\subsection{Information measures}

Recalling that $\mathscr{H}(\{p_x\})\coloneq -\sum_x p_x \ln p_x$ is the \textit{Shannon entropy} for the probability distribution $\{p_x\}$ and  $S(\rho) \coloneq  -\tr[\rho \ln(\rho)]$ is the von Neumann entropy for the quantum state $\rho$,  we define the following information quantities for an observable $\E$, measured by an instrument $\ii$, in the state $\rho$:
\begin{align}\label{eq:entropic-quantities}
\mathscr{H}(p^{\E}_\rho) &\coloneq  - \sum_{x\in \xx} p^{\E}_\rho(x) \ln p^{\E}_\rho(x) \geqslant 0\;, \nonumber \\
I_{\GO}(\ii, \rho) &\coloneq   S(\rho\sub{\s}) - \sum_{x\in \xx} p^{\E}_\rho(x) S(\sigma\sub{\s}^x) \gtreqqless 0  \;.
\end{align}
In the above,  $\mathscr{H}(p^{\E}_\rho)$ is the Shannon entropy of the probability distribution $p^{\E}_\rho$ obtained when measuring the observable $\E$ in the state $\rho$.  The quantity   $I_{\GO}(\ii, \rho)$ is the system's \textit{Groenewold--Lindblad--Ozawa (GLO) information gain}~\cite{Groenewold1971,Lindblad1972,Ozawa1986}: it is uniquely determined by the measurement of the $\E$-compatible instrument $\ii$  in the prior system state $\rho$, where we recall that $\sigma^x\sub{\s}$ are the posterior states defined in \eq{eq:posterior-states},  and is guaranteed to be non-negative for all states $\rho$ if and only if $\ii$ is quasicomplete \cite{Ozawa1986}.  Note that while $\mathscr{H}(p^{\E}_\rho)$    depends only on the observable $\E$ measured in the system, $I_{\GO}(\ii, \rho)$ also depends on the instrument that measures this observable.

\section{Consistency of adiabatic instruments and measurement processes with thermodynamics}

\begin{figure}[!htb]
    \centering
    \makebox[\textwidth][c]{
        \includegraphics[width=0.9\textwidth]{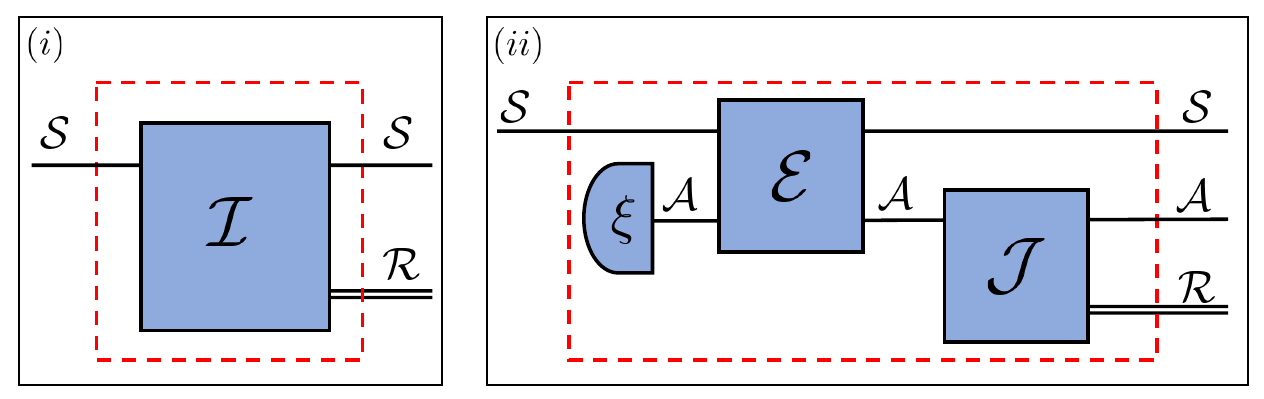}
    }
    \caption{Solid black lines are quantum systems---the system being measured $\s$ and measuring apparatus $\aa$---and the doubled black line is the classical register $\rr$.  The blue boxes are instruments ($\ii$ and $\jj$) that receive a quantum system from the left and produce a quantum system and classical information on the right, channels ($\ee$) that take quantum systems from the left to quantum systems on the right, and state preparations ($\xi$) producing a quantum system on the right in a fixed state. The red dashed box represents the adiabatic boundary, so that any  blue boxes contained within are implemented adiabatically. $(i)$ The adiabatic boundary is drawn over the instrument $\ii$ acting on the system and producing post-measurement states and classical measurement outcomes. $(ii)$ The adiabatic boundary is drawn over the measurement process implementing $\ii$, which includes state preparation of the apparatus $\xi$, premeasurement interaction between system and apparatus $\ee$, and objectification instrument $\jj$ acting on the apparatus.  }
    \label{fig:closure}
\end{figure}

As discussed in the introduction, in order to formulate a thermodynamics of quantum measurements, we assume that either (i) the instrument itself, or (ii) the measurement process realising the instrument, are implemented \emph{adiabatically}.  In the former case, we draw an ``adiabatic boundary" over the compound of measured system and classical register  during the implementation of the instrument, so that the compound is  \textit{thermally closed},  exchanging at most mechanical energy (work) with an external source, but not heat. In the latter case, we extend the adiabatic boundary to also include the  measuring apparatus---which may also include  (finite) subsystems playing the role of environments or heat baths, see, e.g.,  \cite{Strasberg2020b, Latune2024}---used to implement the instrument \cite{Minagawa2023a}. Here, the initial state preparation of the apparatus, the premeasurement interaction between system and apparatus, and the final objectification instrument on the apparatus, are all adiabatically implemented. See \fig{fig:closure} for a schematic representation for the two scenarios. 

As we shall see below, the level at which thermal closure is performed---the instrument or the measuring process---will have consequences for the consistency of measurements with the laws of thermodynamics.  

\subsection{Consistency with the second law}

The Clausius statement of the second law is that the work extracted by an isothermal process, i.e., a process  involving an exchange of heat with at most a single thermal bath, is bounded by the change in free energy resulting from the process. While this statement has been extended to the quantum regime and beyond equilibrium, it is usually formulated for processes where the input and output systems coincide.  We now generalise this to channels for arbitrary inputs and outputs:

\


\begin{definition}[Operational formulation of the second law]\label{def:second-law}
Let $\h$ and $\kk$ be systems with Hamiltonians $H\sub{\h}$ and $H\sub{\kk}$, respectively. A channel $\Lambda:\lo(\h)\to\lo(\kk)$ is said to admit an \emph{isothermal implementation} at inverse temperature $\beta$ if its implementation involves an exchange of heat with at most one thermal environment at inverse temperature $\beta$. The \emph{second law} is taken to assert that, for any such isothermal implementation and any initial state $\rho\in\s(\h)$, the extracted work is bounded by the decrease in non-equilibrium free energy:
\begin{align*}
W_\mathrm{ext}(\rho \mapsto \Lambda(\rho))
\leqslant
F(\rho,H\sub{\h},\beta)-F(\Lambda(\rho),H\sub{\kk},\beta)\, ,
\end{align*}
where $F(\sigma,H,\beta)\coloneq \tr[H\sigma]-\beta^{-1}S(\sigma)$
is the non-equilibrium free energy of a system with Hamiltonian $H$ in state $\sigma$, relative to inverse temperature $\beta$.
\end{definition}

In particular, an adiabatically implemented channel is a special case of an isothermally implemented one, namely, the limiting case in which no heat is exchanged with the thermal environment.

\
\begin{lemma}\label{lemma:adiabatic-channel-second-law}
    Let $\h$ and $\kk$ be systems with Hamiltonians $H\sub{\h}$ and $H\sub{\kk}$, respectively. Let $\Phi : \lo(\h) \to \lo(\kk)$ be an adiabatically implemented channel, i.e., such that its implementation does not result in an exchange of heat with any thermal environment, so that for any initial state $\rho \in \s(\h)$ the work extracted by $\Phi$ reads
    \begin{align*}
      W_\mathrm{ext}(\rho \mapsto \Phi(\rho)) = \tr[ H\sub{\h} \rho] - \tr[ H\sub{\kk} \Phi(\rho)] \, .   
    \end{align*}
    The channel $\Phi$ is consistent with the second law if and only if $S(\Phi(\rho)) \geqslant S(\rho)$ for all $\rho \in \s(\h)$. 
\end{lemma}
\begin{proof}
 For any $\beta$ and initial state $\rho$,  we may write 
 \begin{align*}
 W_\mathrm{ext}(\rho \mapsto \Phi(\rho)) = -\Delta F_{\h \to \kk} - \beta^{-1} \Delta S_{\h \to \kk} \, ,    
 \end{align*}
where $\Delta F_{\h \to \kk} \coloneq F(\Phi(\rho), H\sub{\kk}, \beta) - F(\rho, H\sub{\h}, \beta)$ and $\Delta S_{\h \to \kk} \coloneq S(\Phi(\rho)) - S(\rho)$. An adiabatically implemented channel can be seen as a special type of isothermal channel, where no heat is exchanged with the external thermal environment. By \defref{def:second-law}, $\Phi$ is consistent with the second law if and only if $\Delta S_{\h \to \kk} \geqslant 0$ for all $\rho$. 
\end{proof}
\

\begin{remark}\label{remark:state-prep-second-law}
All adiabatic state preparations are consistent with the second law. A state preparation is a channel that sends a trivial system $\co^1 \simeq \co |\Omega\>$ to a non-trivial system $\h$ as $\Upsilon_\rho : \lo(\co^1) \to \lo(\h), \proj{\Omega} \mapsto \Upsilon_\rho(\proj{\Omega}) = \rho$. But for any $\rho$ we have that $S(\rho) - S(\proj{\Omega}) = S(\rho) \geqslant 0$.     
\end{remark}

We now relate the above formulation to the more familiar unitary picture, in which the thermal bath is included explicitly as part of the quantum description. The basic idea is simple: an isothermal channel on a system can always be regarded as the reduced dynamics of a larger adiabatic unitary evolution, where the bath degrees of freedom are treated internally.

Let $\s$ and $\s'$ be systems with Hilbert spaces $\hs$ and $\h\sub{\s'}$, which may have different dimensions, and let $\bb$ be an auxiliary system playing the role of an internal thermal bath, with Hilbert space $\h\sub{\bb}$. Consider a channel $\ee : \lo(\hs \otimes \h\sub{\bb}) \to \lo(\h\sub{\s'} \otimes \h\sub{\bb})$. By the Stinespring dilation theorem, $\ee$ admits a unitary realization of the form
\begin{align}\label{eq:general-unitary-stinespring-channel}
    \ee(\bigcdot) = \tr\sub{\s+\cc}[U(\bigcdot \otimes \sigma)U^*] \, ,
\end{align}
where $\sigma$ is a state on $\h\sub{\s'} \otimes \h\sub{\cc}$, $\cc$ is an auxiliary system of sufficiently large dimension, and $U$ is a unitary operator acting on $\hs \otimes \h\sub{\bb} \otimes \h\sub{\s'} \otimes \h\sub{\cc}$.

We equip the total system with initial and final additive Hamiltonians
\begin{align}\label{eq:trivial-Hams}
    H_\mathrm{in} &= \hsys + H\sub{\bb} + \tilde H\sub{\s'} + H\sub{\cc} \, , \nonumber \\
    H_\mathrm{fin} &= \tilde H\sub{\s} + H\sub{\bb} + H\sub{\s'} + H\sub{\cc} \, ,
\end{align}
and choose
\begin{align*}
    \tilde H\sub{\s} = \lambda \one\sub{\s} \, , \qquad
    \tilde H\sub{\s'} = \lambda \one\sub{\s'} \, , \qquad
    H\sub{\cc} = \delta \one\sub{\cc} \, ,
\end{align*}
for arbitrary $\lambda,\delta \in \re$. With this choice, the final  and initial energies of the input  $\s$ and output $\s'$, respectively, cancel out, while the auxiliary system $\cc$ only serves as a bookkeeping device and does not affect the thermodynamic balance. 

Since unitary evolution exchanges no heat with an external environment, it is adiabatic, and the extracted work for an initial state $\varrho$ on $\hs \otimes \h\sub{\bb}$ is
\begin{align*}
    W_\mathrm{ext}
    = \tr[H_\mathrm{in}\,\varrho \otimes \sigma]
      - \tr[H_\mathrm{fin}\,U(\varrho \otimes \sigma)U^*] \, .
\end{align*}
With the above Hamiltonian choice, this reduces to
\begin{align*}
    W_\mathrm{ext}
    = \tr[(\hsys + H\sub{\bb})\varrho]
      - \tr[(H\sub{\s'} + H\sub{\bb})\ee(\varrho)] \, ,
\end{align*}
showing that $\ee$ is an adiabatically implemented channel on $\hs \otimes \h\sub{\bb}$.

Now let $\varrho = \rho \otimes \tau_\beta$, where $\rho$ is an arbitrary state on $\hs$ and $\tau_\beta = e^{-\beta H\sub{\bb}}/\tr[e^{-\beta H\sub{\bb}}]$ is the Gibbs state of the bath at inverse temperature $\beta$. Defining the reduced channel $\Phi : \lo(\hs) \to \lo(\h\sub{\s'})$ as
\begin{align*}
    \Phi(\bigcdot) \coloneq \tr\sub{\bb}[\ee(\bigcdot \otimes \tau_\beta)] \, ,
\end{align*}
we obtain an isothermally implemented channel, since its realization involves interaction with a single thermal bath $\bb$. For any $\rho$, the extracted work satisfies
\begin{align*}
    W_\mathrm{ext}
    \leqslant -\Delta F_{\s \to \s'}
      - \beta^{-1}\Delta S_{\s+\bb \to \s'+\bb} \, .
\end{align*}

Therefore, if the adiabatically implemented channel $\ee$ is consistent with the second law, so that $\Delta S_{\s+\bb \to \s'+\bb} \geqslant 0$ for all $\rho$, then the induced isothermal channel $\Phi$ is also consistent with the second law. In the special case $\ee = \Phi \otimes \idch\sub{\bb}$, corresponding to $U = V \otimes \one\sub{\bb}$, the reduced channel $\Phi$ itself is adiabatically implemented, and consistency with the second law is equivalent to the requirement that $\Phi$ does not decrease the entropy of any state on $\hs$.

Now we shall see how drawing the adiabatic boundary over a quantum instrument imposes constraints on its consistency with the second law:

\

\begin{lemma}\label{lemma:instrument-second-law}
 Let $\E \coloneq \{E_x : x\in \xx\}$ be a POVM on $\hs$, and $\ii \coloneq \{\ii_x : x \in \xx\}$ be an $\E$-instrument acting on $\hs$. If $\ii$ is implemented adiabatically, then it is consistent with the second law if and only if    
 \begin{align}\label{eq:instrument-second-law}
        \mathscr{H} (p^\E_\rho) \geqslant I_{\GO} (\ii, \rho) \qquad \forall \rho \in \s(\hs) \, .
    \end{align}
\end{lemma}
\begin{proof}
 Note that an instrument has both a quantum and a classical output, i.e., by \eq{eq:register-instrument-channel} and \eq{eq:avg-posterior-state},  for any input $\rho$ of the system, the expected output of system and classical register is 
\begin{align*}
   \Xi^\ii(\rho) = \sum_{x\in \xx} p^\E_\rho(x) \sigma^x\sub{\s} \otimes \proj{x}\sub{\rr} \, .  
\end{align*}
By \lemref{lemma:adiabatic-channel-second-law}, if  $\Xi^\ii$ is implemented adiabatically then it is consistent with the second law if and only if $S(\Xi^\ii(\rho)) \geqslant S(\rho)$ for all $\rho$. But since  $\{\proj{x}\}$ are mutually orthogonal, it holds that
\begin{align*}
 S(\Xi^\ii(\rho)) - S(\rho) =   \mathscr{H}(p^\E_\rho) - I_{\GO}(\ii, \rho) \, ,
\end{align*}
which is non-negative for all $\rho$ if and only if  \eq{eq:instrument-second-law} holds.    
\end{proof}

It is immediately clear that not all instruments can be adiabatically implemented in a way that is consistent with the second law. For example, consider $\hs = \co^d$ and a binary observable on $\hs$ given as $\E = \{E_+, E_-\}$ with the instrument $\ii_\pm(\bigcdot) = \tr[\bigcdot \, E_\pm] \proj{\psi_\pm}$ for an arbitrary pair of unit vectors $\ket{\psi_\pm}$. This instrument is quasicomplete, but not efficient, and we have that $I_{\GO}(\ii,\rho) = S(\rho)$ for all $\rho$.  Now note that $\mathscr{H}(p^\E_\rho) \leqslant \log(2)$ for any state $\rho$, whereas if we choose $\rho$ as the complete mixture then $I_{\GO}(\ii,\rho) = \log(d)$. Therefore, for $d > 2$ such an instrument violates \eq{eq:instrument-second-law}. 

The above notwithstanding, there are classes of instruments that are guaranteed to be consistent with the second law, when implemented adiabatically.  

\

\begin{corollary}\label{corollary:obj-bistochastic-efficient}
Each of the  following is a sufficient condition for an adiabatically implemented $\E$-instrument $\ii$ to be consistent with the second law, i.e., to  satisfy \eq{eq:instrument-second-law}:
\begin{enumerate}[(i)]
    \item $\ii_\xx$ is bistochastic.
    \item $\ii_x = \Phi_x \circ \ii^L_x$, where $\ii^L_x(\bigcdot) \coloneq \sqrt{E_x} \, \bigcdot \, \sqrt{E_x}$ is the generalized L\"uders instrument for $\E$ and $\Phi_x$ are arbitrary bistochastic channels.
    \item  $\ii$ is efficient.
\end{enumerate}
\end{corollary}

\begin{proof}

We first prove $(i)$.  If $\ii_\xx$ is bistochastic, then 
 \begin{align*}
 I_{\GO}(\ii , \rho) &= S(\rho) - \sum_{x\in \xx} p^\E_\rho(x) S(\sigma^x) \\
 & \leqslant S(\ii_\xx(\rho) ) - \sum_{x\in \xx} p^\E_\rho(x) S(\sigma^x) \\
 & = \chi(\{ p^\E_\rho(x), \sigma^x\}) \\
 &\leqslant \mathscr{H}(p^\E_\rho)\, , 
 \end{align*}
 where $\chi(\{ p^\E_\rho(x), \sigma^x\})$ is the Holevo $\chi$-quantity of the ensemble $\{ p^\E_\rho(x), \sigma^x\}$, which is known to be upper bounded by $\mathscr{H}(p^\E_\rho)$.

Now we prove $(ii)$. Note that $\ii_x^*(\oneapp) = \sqrt{E_x} \Phi_x^*(\oneapp)\sqrt{E_x} = \sqrt{E_x} \oneapp \sqrt{E_x} = E_x$ and so $\ii$ here is compatible with $\E$. Moreover,  $\ii_\xx(\oneapp) = \sum_x \Phi_x(E_x)$ which, even though $\Phi_x$ are bistochastic, in general does not equal $\oneapp$ unless $\Phi_x = \Phi_y$ for all $x,y$, and so in general $\ii_\xx$ is not bistochastic.  Denote $\tilde \sigma^x \coloneq \ii^L_x(\rho) / p^\E_\rho(x)$, and note that since $\Phi_x$ are bistochastic then $ S(\sigma^x) \geqslant S(\tilde \sigma^x) $ for all $x$ and prior states $\rho$. It follows that 
\begin{align*}
 I_{\GO}(\ii , \rho) &= S(\rho) - \sum_{x\in \xx} p^\E_\rho(x) S(\sigma^x)  \leqslant S(\rho ) - \sum_{x\in \xx} p^\E_\rho(x) S(\tilde \sigma^x) =  I_{\GO}(\ii^L , \rho)\, .
 \end{align*}
But, since $\ii^L_\xx$ is bistochastic then by the same arguments as in $(i)$ it holds that $I_{\GO}(\ii^L , \rho) \leqslant \mathscr{H}(p^\E_\rho)$, and so   $I_{\GO}(\ii , \rho) \leqslant \mathscr{H}(p^\E_\rho)$ for all $\rho$.   
 
Now we prove $(iii)$. This follows from $(ii)$ and noting that if $\Phi_x( \bigcdot) = U_x \, \bigcdot \, U_x^*$ are unitary channels then $\ii$ is efficient.

\end{proof}

Now we shall turn to the case where the adiabatic boundary is drawn not over the instrument on the measured system, but rather over the measurement process implementing it: 

\

\begin{definition}\label{def:measurement-process-second-law}
An adiabatically implemented measurement process is said to be \emph{consistent with the second law} whenever each of its constituent elements is consistent with the second law. Concretely, let  $\mm\coloneq  (\ha, \xi, \ee , \jj)$ be an adiabatically implemented measurement process for $\hs$. $\mm$ is consistent with the second law if and only if the following both hold: 
\begin{enumerate}[(i)]
    \item The premeasurement channel $\ee$ is  bistochastic.  

    \item The $\Z$-compatible objectification instrument $\jj$ satisfies   
    \begin{align}\label{eq:objectification-second-law}
        \mathscr{H} (p^\Z_\varrho) \geqslant I_{\GO} (\jj, \varrho) \qquad \forall \varrho \in \s(\ha) \, ,
    \end{align}
    where $p^\Z_\varrho(x) \coloneq \tr[Z_x \varrho] = \tr[\jj_x(\varrho)]$,  while the Shannon entropy $\mathscr{H}(p^\Z_\varrho)$ and the information gain $I_{\GO} (\jj, \varrho)$ are defined analogously to  \eq{eq:entropic-quantities}.
\end{enumerate}
\end{definition}

First, let us recall from \remref{remark:state-prep-second-law} that all adiabatic state preparations are consistent with the second law, so the constraints are imposed only on premeasurement and objectification. By \lemref{lemma:adiabatic-channel-second-law}, $\ee$ must not decrease the entropy of any state. Since $\ee$ has the same input and output,  if  $\ee$ is bistochastic, then it will satisfy this condition. But since the maximally mixed  state has the strictly highest possible entropy of any state in the system, if $\ee$ is not bistochastic then it will decrease the entropy of this state~\cite{Purves-2020}. Therefore, on the one hand we have $(i)$. On the other hand, $(ii)$ follows directly from \lemref{lemma:instrument-second-law}.

\

\begin{lemma}\label{lemma:augmented-instrument-second-law}
The second-law consistency of an adiabatically implemented measurement process is inherited by the instrument it induces on the joint system of system and apparatus. Concretely, let $\mm \coloneq (\ha, \xi, \ee, \jj)$ be an adiabatically implemented measurement process for $\hs$. If $\mm$ is consistent with the second law, then the corresponding global instrument acting on $\hs \otimes \ha$,
\begin{align*}
   (\idchsys \otimes \jj)\circ \ee \coloneq \{(\idchsys \otimes \jj_x)\circ \ee: x \in \xx\} \, ,
\end{align*}
which is compatible with the Heisenberg-evolved pointer observable $\Z^\tau \coloneq \{Z_x^\tau \equiv \ee^*(\onesys \otimes Z_x) : x \in \xx\}$, is itself consistent with the second law. That is, it satisfies
\begin{align}\label{eq:augmented-instrument-second-law}
    \mathscr{H}(p^{\Z^\tau}_\varrho) \geqslant I_{\GO}\bigl((\idchsys \otimes \jj)\circ \ee, \varrho\sub{\s\aa}\bigr)
    \qquad \forall \, \varrho\sub{\s\aa} \in \s(\hs \otimes \ha) \, .
\end{align}
\end{lemma}

\begin{proof}
Let us denote $\idchsys \otimes \jj \coloneq \{\idchsys \otimes \jj_x : x \in \xx\}$ as the extension of $\jj$  in $\hs \otimes \ha$. For any state $\varrho\sub{\s \aa}$ on $\hs \otimes \ha$, denote $\varrho\sub{\aa}\coloneq \trs[\varrho\sub{\s\aa}]$ as the reduced state in $\ha$, and by \eq{eq:register-instrument-channel} and \eq{eq:posterior-states} define
\begin{align}\label{eq:gamma-sak}
  \gamma\sub{\s \aa \rr} &\coloneq \idchsys \otimes \Xi^\jj(\varrho\sub{\s\aa}) =  \sum_{x \in \xx} \idchsys \otimes \jj_x (\varrho\sub{\s \aa}) \otimes |x\>\!\<x| = \sum_{x \in \xx} p^\Z_{\varrho}(x) \,  \gamma\sub{\s \aa}^x \otimes |x\>\!\<x| \, , 
\end{align}
where we note that $p^\Z_{\varrho}(x) = \tr[\jj_x(\varrho\sub{\aa})] = \tr[\idchsys \otimes \jj_x (\varrho\sub{\s \aa})]$,  that $\gamma\sub{\aa}^x \coloneq \trs[\gamma\sub{\s \aa}^x] = \jj_x(\varrho\sub{\aa}) / p^\Z_{\varrho}(x)$, and that 
\begin{align}\label{eq:gamma-ak}
\gamma\sub{\aa\rr} \coloneq \trs[\gamma\sub{\s\aa\rr}] = \Xi^\jj(\varrho\sub{\aa}) =  \sum_{x \in \xx} p^\Z_{\varrho}(x) \,  \gamma\sub{\aa}^x \otimes |x\>\!\<x| \, .
\end{align}
Then for any state $\varrho\sub{\s \aa}$, it holds that 
\begin{align*}
    I_{\GO}(\idchsys \otimes \jj, \varrho\sub{\s \aa} ) - I_{\GO}(\jj, \varrho\sub{\aa}) &= \bigg[S(\varrho\sub{\s \aa}) - S(\varrho\sub{\aa}) \bigg] -  \sum_{x \in \xx} p^\Z_{\varrho}(x) \bigg[ S(\gamma\sub{\s \aa} ^x) - S(\gamma\sub{\aa} ^x) \bigg] \\
    & = \bigg[S(\varrho\sub{\s \aa}) - S(\varrho\sub{\aa}) \bigg] - \bigg[ S(\gamma\sub{\s \aa \rr}) - S(\gamma\sub{\aa \rr}) \bigg]  \\
    & = S(\s|\aa)_\varrho - S(\s|\aa \rr)_\gamma \\
    & \leqslant 0 \, .
\end{align*}
Here, the second line follows from \eq{eq:gamma-sak} and \eq{eq:gamma-ak} and the fact that $\{\proj{x}\}$ are mutually orthogonal, while  the final line follows from the data processing inequality for coherent information and the fact that $\gamma\sub{\s\aa\rr}$ is obtained from $\varrho\sub{\s\aa}$ by a local channel  $ \Xi^\jj : \lo(\ha) \to \lo(\ha \otimes \hr)$ \cite[Theorem 11.9.3]{Wilde_2017}. 

If $\jj$ satisfies \eq{eq:objectification-second-law}, it follows that 
\begin{align*}
I_{\GO}(\idchsys \otimes \jj, \ee(\varrho\sub{\s\aa}) ) & \leqslant  I_{\GO}( \jj, \trs[\ee(\varrho\sub{\s\aa})] ) \leqslant \mathscr{H}(p^\Z_{\trs[\ee(\varrho)]}) = \mathscr{H}(p^{\Z^\tau}_{\varrho}) \,     
\end{align*}
holds for all $\varrho\sub{\s\aa}$, where the final equality follows from 
\begin{align*}
p^\Z_{\trs[\ee(\varrho)]}(x) = \tr[Z_x \trs[\ee(\varrho\sub{\s\aa})]] = \tr[\ee^*(\onesys \otimes Z_x) \varrho\sub{\s\aa}] = p^{\Z^\tau}_{\varrho}(x) \, .    
\end{align*}
Now define  $\sigma^x\sub{\s\aa} \coloneq  [(\idchsys \otimes \jj_x) \circ \ee](\varrho\sub{\s\aa})/ p^{\Z^\tau}_{\varrho}(x) $. If $\ee$ is bistochastic then for all $\varrho\sub{\s\aa}$  it holds that
\begin{align*}
 I_{\GO}(\idchsys \otimes \jj, \ee(\varrho\sub{\s\aa}) ) &= S(\ee(\varrho\sub{\s\aa})) - \sum_{x \in \xx} p^{\Z^\tau}_{\varrho}(x) S(\sigma^x\sub{\s \aa}) \\
 &\geqslant S(\varrho\sub{\s\aa})  - \sum_{x \in \xx} p^{\Z^\tau}_{\varrho}(x) S(\sigma^x\sub{\s \aa}) \\
 & =  I_{\GO}((\idchsys \otimes \jj) \circ \ee, \varrho\sub{\s\aa} ) \, .
\end{align*}
 Consequently, \eq{eq:augmented-instrument-second-law} holds.
\end{proof}

While not all adiabatically implemented instruments are  consistent  with the second law, all instruments admit an adiabatically implemented measurement process that is consistent with the second law:

\

\begin{lemma}
    Every instrument $\ii$ acting on $\hs$ admits an adiabatically implemented measurement process $\mm \coloneq (\ha, \xi, \ee, \jj)$ that is consistent with the second law. 
\end{lemma}
\begin{proof}
By the von Neumann--Naimark--Stinespring--Kraus--Ozawa dilation theorem~\cite{Ozawa1984},  any instrument can be realised by a ``pure'' process where  $\xi$ is pure, $\ee$ is  unitary (and hence bistochastic), and $\Z$ is projection valued. Such a state preparation and premeasurement satisfy the conditions laid out in \defref{def:measurement-process-second-law}, so now we just need to turn to the objectification instrument.   Recall that the choice of the particular $\Z$-compatible instrument $\jj$ acting on the apparatus is irrelevant for the system instrument $\ii$. Thus, without loss of generality, we may always assume that the measurement of $\Z$ happens via a generalized L\"uders instrument $\jj^L_x(\bigcdot ) \coloneq \sqrt{ Z_x}  \, \bigcdot  \, \sqrt{ Z_x}$, which gives rise to a bistochastic channel $\jj^L_\xx$. As a consequence, by \corref{corollary:obj-bistochastic-efficient} such a measurement process is consistent with the second law.    
\end{proof}

Indeed, consider an $\E$-instrument $\ii$ such that $I_{\GO}(\ii, \rho) > \mathscr{H}(p^\E_\rho)$ for some state $\rho$. Now let $\mm = (\ha, \xi, \ee, \jj^L)$ be a pure measurement process for $\ii$ as described above. In such a case, it holds that the probability distribution obtained by measuring the Heisenberg-evolved pointer observable $\Z^\tau$ on the state $\rho \otimes \xi$ of system and apparatus,  $p^{\Z^\tau}_{\rho \otimes \xi} $, equals the probability distribution obtained by measuring $\E$ in the system state $\rho$, $p^\E_\rho$, which follows from the fact that  
\begin{align*}
    p^{\Z^\tau}_{\rho \otimes \xi}(x) =  \tr[\ee^*(\onesys \otimes Z_x ) \rho \otimes \xi] = \tr[\Gamma_\xi \circ \ee^*(\onesys \otimes Z_x ) \rho ]  = \tr[E_x \, \rho] = p^\E_\rho(x) \, .
\end{align*}
But since $\mm$ is consistent with the second law, by  \lemref{lemma:augmented-instrument-second-law} it will hold that 
\begin{align*}
    \mathscr{H}(p^\E_\rho) \geqslant I_{\GO} ((\idchsys \otimes \jj^L) \circ \ee, \rho \otimes \xi)
\end{align*}
for any state $\rho$.   In other words, the second law does not impose a hard constraint on \textit{what} instruments can be implemented, only on \textit{how} they are implemented \cite{Minagawa2023a}.

\subsection{Consistency with the third law}

The third law of thermodynamics, or Nernst's unattainability principle, states that it is impossible for any deterministic adiabatic process to cool a system to absolute zero temperature with finite resources \cite{callen1991thermodynamics}. In the regime of finite-dimensional quantum systems, the third law is interpreted as implying the unattainability of pure states \cite{Schulman2005, Allahverdyan2011a, Freitas2018a, Taranto2021}. We now reiterate the operational definition for the third law presented in Ref. \cite{Mohammady2022a}.

\


\begin{definition}[Operational formulation of the third law]\label{def:third-law}
Let $\h$ and $\kk$ be finite quantum systems. We take the third law to assert that a deterministic adiabatic process between finite-dimensional quantum systems is possible if and only if the corresponding channel $\Phi:\lo(\h)\to\lo(\kk)$ is strictly positive.
\end{definition}

The argument for the above is as follows. A finite-dimensional quantum system, with Hamiltonian $H$, that is in  thermal equilibrium with respect to inverse temperature $\beta$ is in the Gibbs state $\tau_\beta \coloneq e^{-\beta H}/ \tr[e^{-\beta H}]$. Such states are strictly positive for non-vanishing temperatures, i.e.,  whenever $\beta < \infty$. On the other hand, if the Hamiltonian is non-trivial, i.e., $H \not\propto \one$, then at zero temperature ($\beta = \infty$) the Gibbs state is  rank-deficient. Since a deterministic quantum process is a channel, then the third law for finite-dimensional quantum systems can  be construed as the statement that for any adiabatically realisable channel $\Phi$, it must hold that  $\beta < \infty \implies \Phi(\tau_\beta) > \zero$. Since the existence of a strictly positive state in the image of a positive linear map is equivalent to the strict positivity of such a map, then an adiabatically implemented channel is consistent with the third law if and only if it is strictly positive \cite{VomEnde2022a}. 

\

\begin{remark}\label{remark:third-law-state}
An adiabatic state preparation $\rho$ on $\h$ is consistent with the third law if and only if $\rho$ is strictly positive. Recall that a state preparation  is given by the channel $\Upsilon_\rho : \lo(\co^1) \to \lo(\h), \proj{\Omega} \mapsto \Upsilon_\rho(\proj{\Omega}) = \rho$. $\Upsilon_\rho$ is consistent with the third law if and only if it is strictly positive. But $\proj{\Omega}$ is strictly positive in $\co^1$, and so $\Upsilon_\rho$ is strictly positive if and only if $\rho$ is strictly positive.    
\end{remark}

Contrast \defref{def:third-law} to \defref{def:second-law}. While the second law did not limit its applicability to quantum systems alone, the third law does. That is, the third law does not pertain to the classical register, which is a macroscopic device.  This plays a crucial role in what is to follow: 

\

\begin{lemma}\label{lemma:instrument-third-law}
 Let  $\ii \coloneq \{\ii_x : x \in \xx\}$ be an instrument acting on $\hs$. If $\ii$ is implemented adiabatically, then it is consistent with the third law if and only if $\ii_x$ is strictly positive for some $x\in \xx$.    
\end{lemma}
\begin{proof}
Recall that the outcomes of $\ii$ are stored in a classical register, which is a macroscopic device and so is not itself subject to the third law construed above for finite-dimensional quantum systems. However, the third law allows us to act on $\hs$ by strictly positive channels \emph{conditional} on the measurement outcomes. That is, after the application of $\Xi^\ii : \lo(\hs) \to \lo(\hs \otimes \hr)$ defined in \eq{eq:register-instrument-channel}, we may apply the  quantum-classical feedback channel $\Lambda(\bigcdot\sub{\s} \otimes \bigcdot\sub{\rr}) \coloneq \sum_x \Phi_x(\bigcdot\sub{\s}) \otimes \proj{x} \bigcdot\sub{\rr} \proj{x}$ on $\hs \otimes \hr$ where, for each $x$,  $\Phi_x$ are strictly positive channels acting on $\hs$.   The effective channel acting on $\hs$, given by the action of the instrument $\ii$ followed by feedback, is thus 
\begin{align*}
   \Theta^\ii_{\{\Phi_x\}}(\bigcdot) \coloneq \tr\sub{\rr} \circ \Lambda \circ \Xi^\ii (\bigcdot) =   \sum_{x \in \xx} \Phi_x \circ \ii_x(\bigcdot) \, .
\end{align*}
Therefore, by \defref{def:third-law} $\ii$ is consistent with the third law if and only if $\Theta^\ii_{\{\Phi_x\}}$ is strictly positive for all possible families $\{\Phi_x\}$ of strictly positive channels. Now we shall show that $\ii_x$ being strictly positive for some $x$ is both necessary and sufficient for this condition.   Sufficiency  follows from the fact that the composition of strictly positive operations is strictly positive, and a sum of positive maps is strictly positive if one of the maps is so.  Necessity follows from the fact that if none of $\ii_x$ are strictly positive, then they must each map the complete mixture $\one\sub{\s}/\dim(\hs)$ to a rank-deficient state. In such a case, unitary channels $\Phi_x$ may be chosen so that they map the support of each such conditional output state to the same proper subspace of $\hs$, implying that $\Theta^\ii_{\{\Phi_x\}}$ maps the complete mixture to a rank-deficient state.    
\end{proof}

\

\begin{remark}\label{remark:all-observables-sp-instrument}
    Note that restricting one of the operations $\ii_x$ to be strictly positive  does not in any way limit the  observable $\E$ that can be measured by $\ii$, since every effect $E_x$ admits  strictly positive $E_x$-compatible operations. Indeed, let $\E$ be an arbitary POVM and consider the $\E$-compatible instrument $\ii_x(\bigcdot) \coloneq \tr[ \bigcdot \, E_x] \sigma_x$ such that $\sigma_x > \zero$ for all $x$. In such a case, $\ii_x$ are strictly positive for all $x$ and, moreover, for all possible families of strictly positive channels $\{\Phi_x\}$ the channel $\Theta^\ii_{\{\Phi_x\}}(\bigcdot) = \sum_x \tr[\bigcdot \, E_x] \Phi_x(\sigma_x)$ maps all inputs to a strictly positive output.
\end{remark}

\

Now let us turn to the case where we draw the adiabatic boundary over the measurement process, and not the system instrument itself. By \defref{def:third-law}, \remref{remark:third-law-state}, and \lemref{lemma:instrument-third-law} we obtain the following:

\

\begin{definition}\label{def:measurement-process-third-law}
An adiabatically implemented measurement process is said to be consistent with the third  law whenever each of its constituent elements is consistent with the third law. Concretely, let  $\mm\coloneq  (\ha, \xi, \ee , \jj)$ be an adiabatically implemented measurement process for $\hs$. $\mm$ is consistent with the third law if and only if the following all hold: 
\begin{enumerate}[(i)]
    \item The apparatus state preparation $\xi$ is strictly positive.

    \item The premeasurement channel $\ee$ is  strictly positive.  

    \item For at least one outcome $x$, the operation $\jj_x$ of the $\Z$-compatible objectification instrument $\jj$ is strictly positive. 
\end{enumerate}
\end{definition}

\

\begin{lemma}\label{lemma:instrument-measurement-process-third-law}
Let $\ii \coloneq \{\ii_x: x \in \xx\}$ be an instrument acting on $\hs$. $\ii$ admits an adiabatically implemented measurement process $\mm\coloneq  (\ha, \xi, \ee , \jj)$ that  is consistent with the third law if and only if $\ii_x$   is strictly positive for all $x$. 
\end{lemma}
\begin{proof}
The necessity was shown in \cite[Lemma D.1]{Mohammady2022a}, and the sufficiency was later shown in \cite[Proposition E.1]{Shahbeigi2025} . We shall provide a simple proof for necessity here. Let us note that we may equivalently write \eq{eq:system-instrument} as
\begin{align*}
    \ii_x(\bigcdot) = \tra \circ (\idchsys \otimes \jj_x)\circ \ee \circ \Upsilon_\xi(\bigcdot)
\end{align*}
where $\Upsilon_\xi : \lo(\hs) \to \lo(\hs \otimes \ha), \rho \mapsto \rho \otimes \xi$ is the preparation map,  which is strictly positive if $\xi > \zero$. The partial trace is also strictly positive.  Now, recall that $\ii_x$ is uniquely determined by the effect $Z_x$ that is measured by $\jj_x$, and not on $\jj_x$ itself. Since every effect admits a strictly positive operation (\remref{remark:all-observables-sp-instrument}), without loss of generality we may assume that $\jj_x$ are strictly positive for all $x$. If $\mm$ is consistent with the third law, by \defref{def:third-law}  $\xi$ is a strictly positive state, and $\ee$ is a strictly positive channel. In such a case $\ii_x$ can be seen as being realised by a composition of strictly positive maps, and so it must be strictly positive, for all $x$.  
\end{proof}

\section{Consistency with both laws: efficient instruments and measurement processes}

Now we shall consider the consistency of adiabatically implemented instruments and measurement processes  with both the second and third laws, in particular highlighting the impact of where we draw the adiabatic boundary on the realisability of  efficient measurements. First, let us recall a useful result:

\

\begin{lemma}\label{lemma:purity-preserving-strictly-positive}
    Let $\Phi$ be an operation acting on $\h$. Assume that $\Phi$ is purity-preserving and strictly positive, in the sense that $A > \zero$ implies $\Phi(A) > \zero$. Then
    \[
        \Phi(\bigcdot) = U \sqrt{E} \, \bigcdot \, \sqrt{E} U^*
    \]
    with $E$ a strictly positive effect and $U$ a unitary operator.
\end{lemma}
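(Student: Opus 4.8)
The plan is to reduce everything to a statement about Kraus operators: I will argue that the two hypotheses force $\Phi$ to have a single, \emph{invertible} Kraus operator $K$, and then read off the asserted normal form from the polar decomposition of $K$. Fix a minimal Kraus representation $\Phi(\bigcdot)=\sum_{i=1}^{n}K_i(\bigcdot)K_i^*$, so that $\{K_i\}_{i=1}^n$ is linearly independent. Note first that, since $\Phi$ is trace non-increasing, $E:=\Phi^*(\one)=\sum_i K_i^*K_i$ already satisfies $\zero\leqslant E\leqslant\one$, so it is automatically an effect; the real content of the lemma is the operator form of the $K_i$. For a pure input, $\Phi(\ket{\psi}\bra{\psi})=\sum_i K_i\ket{\psi}\bra{\psi}K_i^*$ has range $\spann\{K_i\ket{\psi}\}$, so purity-preservation is equivalent to: for every $\ket{\psi}$ the vectors $K_1\ket{\psi},\dots,K_n\ket{\psi}$ span a subspace of dimension at most one; in particular, any two of them are linearly dependent.

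The crux --- and what I expect to be the main obstacle --- is the following elementary but slightly delicate linear-algebra fact: if $A,B\in\lo(\h)$ are \emph{linearly independent} and yet $A\ket{\psi}$ and $B\ket{\psi}$ are linearly dependent for every $\ket{\psi}$, then $A$ and $B$ are both rank one with a \emph{common} range. To prove it I would first exclude $\rank{B}\geqslant 2$: on the complement of $\ker B$ (connected and dense in $\h$) one may write $A\ket{\psi}=\lambda(\ket{\psi})\,B\ket{\psi}$, and comparing coefficients along a line $\ket{\psi_1}+s\ket{\psi_2}$ with $B\ket{\psi_1},B\ket{\psi_2}$ independent shows $\lambda$ is locally --- hence, routing through a third vector, globally --- constant; continuity then gives $A=\lambda B$, contradicting independence. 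Thus $\rank{A}=\rank{B}=1$, say $A=\ket{a}\bra{\alpha}$ and $B=\ket{b}\bra{\beta}$, and the required parallelism $\langle\alpha|\psi\rangle\ket{a}\parallel\langle\beta|\psi\rangle\ket{b}$ (for every $\ket{\psi}$) forces $\ket{a}\parallel\ket{b}$, because $\alpha^{\perp}\cup\beta^{\perp}$ cannot exhaust $\h$. Establishing the constancy of $\lambda$ in the non-invertible case is the only place where genuine care is needed.

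With this fact available the proof closes quickly. If $n\geqslant 2$, applying it to each pair $(K_1,K_i)$ --- linearly independent by minimality --- shows that every $K_i$ is rank one with the same range $\co\ket{a}$ as $K_1$, whence $\Phi(\bigcdot)=\big(\sum_i\bra{\beta_i}(\bigcdot)\ket{\beta_i}\big)\ket{a}\bra{a}$ takes values in $\co\ket{a}\bra{a}$. Then $\Phi(\one)$ has rank at most one, contradicting strict positivity as soon as $\dim(\h)\geqslant 2$ (the case $\dim(\h)=1$ being trivial). Hence $n=1$ and $\Phi(\bigcdot)=K(\bigcdot)K^*$. Strict positivity applied to $\one$ gives $KK^*=\Phi(\one)>\zero$ of full rank, so $K$ is invertible; its polar decomposition $K=U\sqrt{K^*K}$ then has $U$ unitary, and setting $E=K^*K$ --- a strictly positive effect, consistent with $E=\Phi^*(\one)$ above --- yields $\Phi(\bigcdot)=U\sqrt{E}(\bigcdot)\sqrt{E}U^*$, as claimed.
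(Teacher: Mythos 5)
Your proposal is correct, and it reaches the paper's conclusion along essentially the same skeleton: establish the dichotomy that a purity-preserving operation is either of the single-Kraus form $K(\bigcdot)K^*$ or of the ``measure-and-prepare-a-pure-state'' form $\tr[E\,\bigcdot\,]\,|\phi\>\!\<\phi|$, use strict positivity to exclude the second branch, and finish with the polar decomposition. The difference is that the paper simply cites Theorem 3.1 of Davies for the dichotomy (and a result of vom Ende for the equivalence of strict positivity with $\Phi(\one)>\zero$), whereas you re-derive the dichotomy from scratch via the minimal Kraus representation and the linear-algebra fact that two linearly independent operators $A,B$ with $A\ket{\psi}\parallel B\ket{\psi}$ for all $\ket{\psi}$ must both be rank one with a common range. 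Your derivation of that fact is sound: the coefficient-comparison along $\ket{\psi_1}+s\ket{\psi_2}$ forces $\lambda$ constant when $\rank{B}\geqslant 2$ (contradicting independence), and the observation that $\alpha^\perp\cup\beta^\perp$ cannot exhaust $\h$ correctly forces the common range in the rank-one case; the final steps ($\Phi(\one)=KK^*>\zero$ gives invertibility of $K$, hence a genuinely unitary polar factor and a strictly positive $E=K^*K=\Phi^*(\one)\leqslant\one$) are exactly right. What your route buys is self-containedness and an explicit picture of how purity preservation collapses the Kraus rank; what the paper's route buys is brevity and a pointer to the general classification, which holds without the strict-positivity hypothesis.
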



\begin{proof}
Every operation is compatible with an effect $E$. By Theorem 3.1 of Ref. \cite{Davies1976}, a purity-preserving operation is either 
(i) $\Phi(\bigcdot ) = K \, \bigcdot \, K^*$ for some $K \in \lo(\h)$ such that $K^*K = E$, or $(ii)$ $\Phi(\bigcdot ) = \tr[ E \, \bigcdot ] |\phi\>\!\<\phi|$ with $|\phi\>$ a unit vector in $\h$. Option $(ii)$ is evidently not strictly positive, so we are left with option $(i)$. By the polar decomposition, it holds that $K= U \sqrt{E}$. Now note that $\Phi$ is strictly positive if and only if $\Phi(\one)$ is strictly positive \cite{VomEnde2022a}. Since $\Phi(\one) = U E U^*$, it follows that $E$ must be strictly positive.
\end{proof}

Now we are ready to prove our first main result:

\

\begin{theorem}\label{thm:efficient-instrument-both-laws}
    Let $\E\coloneq \{E_x : x\in \xx\}$ be an observable on $\hs$. The following hold:
    \begin{enumerate}[(i)]
        \item  An adiabatically implemented $\E$-instrument $\ii \coloneq \{\ii_x : x\in \xx\}$  acting on $\hs$  is consistent with both the second and third laws if and only if 
        \begin{align}\label{eq:second-law-thm}
            \mathscr{H}(p^\E_\rho) \geqslant I_{\GO} (\ii, \rho) \qquad \forall \rho \in \s(\hs)
        \end{align}
        and $\ii_x$ is strictly positive for some $x$. 

        \item $\E$ admits an adiabatically implemented \emph{efficient} instrument $\ii$ that is consistent with both the second and third laws if and only if $E_x > \zero$ for some $x$.  
        
    \end{enumerate}

\end{theorem}
\begin{proof}
Item $(i)$ follows directly from the conjunction of \lemref{lemma:instrument-second-law} and \lemref{lemma:instrument-third-law}. So now we shall prove $(ii)$.  Recall from \corref{corollary:obj-bistochastic-efficient} that all adiabatically implemented efficient instruments satisfy \eq{eq:second-law-thm}, i.e., they are consistent with the second law, so we need only consider the constraint imposed by the third law: an adiabatically implemented instrument is consistent with the third law if and only if $\ii_x$ is strictly positive for some $x$. So, consider such an operation $\ii_x$. If $\ii$ is an efficient instrument, then this operation is purity preserving and strictly positive.  Then  by \lemref{lemma:purity-preserving-strictly-positive} it must be compatible with a strictly positive effect. On the other hand, for any strictly positive effect $E_x$, the corresponding operation $\ii_x(\bigcdot) = U_x \sqrt{E_x} \bigcdot \sqrt{E_x} U^*$ is completely purity preserving and strictly positive, and so it can belong in the range of an adiabatically implemented efficient instrument that is consistent with the third law.  
\end{proof}

\

\begin{corollary}[First No-Go Result]\label{cor:projective-no-efficient}
Let $\E$ be a non-trivial projective observable. Then $\E$ does not admit a thermodynamically consistent purity-preserving instrument. This follows from the fact that all effects of a non-trivial projective observable are projections with both 0 and 1 in their spectrum,  and hence none is strictly positive.
\end{corollary}

Now let us  move from instruments to measurement processes, and consider the case where the \textit{process} is consistent with both the second and third laws. By the conjunction of  \defref{def:measurement-process-second-law} and \defref{def:measurement-process-third-law}, and recalling that bistochastic channels are also strictly positive,   we obtain the following:

\




\begin{definition}[Consistency of a measurement process with both laws]\label{def:measurement-process-both-law}
Let $\mm\coloneq(\ha,\xi,\ee,\jj)$ be an adiabatically implemented measurement process for $\hs$. We say that $\mm$ is consistent with both the second and third laws if and only if it satisfies the second-law criterion and the third-law criterion simultaneously. Equivalently, the following hold:
\begin{enumerate}[(i)]
    \item The apparatus state $\xi$ is strictly positive.

    \item The premeasurement channel $\ee$ is bistochastic.

    \item The $\Z$-compatible objectification instrument $\jj\coloneq\{\jj_x:x\in\xx\}$ satisfies
    \begin{align*}
        \mathscr{H}(p^\Z_\varrho)\geqslant I_{\GO}(\jj,\varrho)
        \qquad \forall\,\varrho\in\s(\ha),
    \end{align*}
    and there exists at least one outcome $x\in\xx$ such that $\jj_x$ is strictly positive.
\end{enumerate}
\end{definition}

\

 \begin{remark}\label{remark:obj-instrument-both-laws}
  Item $(iii)$ of the above does not impose any constraints on the pointer observable $\Z \coloneq \{Z_x : x \in \xx\}$. Let $d= \dim(\ha)$. For an arbitrary $\Z$, choose $\jj_{x_0}(\bigcdot) = \tr[\bigcdot Z_{x_0}] \oneapp/ d$ for some $x_0$  and $\jj_x(\bigcdot) = \sqrt{Z_x} \bigcdot \sqrt{Z_x}$ for all $x \ne x_0$. That is, all operations correspond to those of the L\"uders instrument $\jj^L$, except for the operation $\jj_{x_0}$ which maps all inputs to the complete mixture. This instrument has at least one strictly positive operation and so is consistent with the third law. Furthermore, defining $\gamma^x \coloneq \jj_x(\varrho)/ p^\Z_\varrho(x)$,  we have that 
  \begin{align*}
      I_{\GO}(\jj, \varrho) &= S(\varrho) -  p^\Z_\varrho(x_0) \ln(d) -\sum_{x\ne x_0} p^\Z_\varrho(x) S(\gamma^x)   \leqslant I_{\GO} (\jj^L, \varrho)  \leqslant \mathscr{H}(p^\Z_\varrho) \, 
  \end{align*}
 for all $\varrho$, and so the instrument is also consistent with the second law.  The first inequality  follows from the fact that the entropy of $\sqrt{Z_{x_0}} \varrho \sqrt{Z_{x_0}} /p^\Z_\varrho(x_0)$ is not larger  than $\ln(d)$ for any $\varrho$, and the second inequality follows from \corref{corollary:obj-bistochastic-efficient} .
 \end{remark}

We are now ready to present our second main result, which is that a thermodynamically consistent adiabatic measurement process  can \textit{never} implement a quasicomplete instrument, let alone an efficient one.

\

\begin{theorem}[Second No-Go Result]\label{thm:bistochastic-third-law-purity-preserving-nogo}
Let $\mm\coloneq (\ha, \xi, \ee, \jj)$ be  an adiabatically implemented measurement process for an  instrument $\ii \coloneq  \{\ii_x : x\in \xx\}$ acting on $\hs$, and assume that $\mm$ is consistent with both the second and third laws. It follows that any operation $\ii_x$,  that is compatible with a non-trivial effect $E_x$,  cannot be purity-preserving. 
\end{theorem}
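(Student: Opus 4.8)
The plan is to prove the contrapositive: assuming that $\ii_x$ is purity-preserving, I will show that the compatible effect $E_x$ must be trivial. The first move is to upgrade purity-preservation to the rigid single-Kraus form supplied by \lemref{lemma:purity-preserving-strictly-positive}; for this I need $\ii_x$ to be strictly positive, which is where hypotheses $(i)$ and $(ii)$ first enter. Since $\ee$ is bistochastic it is strictly positive, and since $\xi$ is strictly positive, for any $A > \zero$ the operator $A \otimes \xi$ is strictly positive, hence so is $\ee(A \otimes \xi)$. Writing $\<\psi|\ii_x(A)|\psi\> = \tr[(|\psi\>\!\<\psi| \otimes Z_x)\, \ee(A \otimes \xi)]$ and noting that $|\psi\>\!\<\psi| \otimes Z_x$ is a nonzero positive operator (because $Z_x \ne \zero$ by our standing convention), the trace of its product with the strictly positive operator $\ee(A \otimes \xi)$ is strictly positive for every nonzero $|\psi\>$. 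Thus $\ii_x(A) > \zero$, so $\ii_x$ is strictly positive, and \lemref{lemma:purity-preserving-strictly-positive} gives $\ii_x(\bigcdot) = K (\bigcdot) K^*$ with $K = U\sqrt{E_x}$, $U$ unitary, and $E_x$ strictly positive.

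The core of the argument is then a rigidity computation. I would fix a Kraus decomposition $\ee(\bigcdot) = \sum_k K_k (\bigcdot) K_k^*$, which by bistochasticity satisfies both $\sum_k K_k^* K_k = \one$ and $\sum_k K_k K_k^* = \one$ on $\hs \otimes \ha$. Expanding $\xi = \sum_l \lambda_l |f_l\>\!\<f_l|$ (all $\lambda_l > 0$, since $\xi$ is full-rank) and $Z_x = \sum_m z_m |g_m\>\!\<g_m|$ (with $z_m \geqslant 0$), a direct substitution into the definition of $\ii_x$ produces the Kraus form $\ii_x(\bigcdot) = \sum_{k,l,m} \lambda_l z_m K_{k,l,m} (\bigcdot) K_{k,l,m}^*$, where the operators on $\hs$ are the partial matrix elements $K_{k,l,m} = (\onesys \otimes \<g_m|) K_k (\onesys \otimes |f_l\>)$. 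Since $\ii_x$ equals the single-Kraus map $K(\bigcdot)K^*$, the uniqueness (up to isometry) of Kraus representations forces every $K_{k,l,m}$ with $z_m > 0$ to be a scalar multiple of $K$. Reassembling these proportionalities shows that, with $\Pi^+$ the spectral projection of $Z_x$ onto its strictly positive part, each dilating Kraus operator obeys the block identity $(\onesys \otimes \Pi^+) K_k = K \otimes C_k$ for some operator $C_k$ on $\ha$ supported in the range of $\Pi^+$.

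The final step turns the \emph{unital} half of bistochasticity into a contradiction. Summing the block identity against its adjoint and invoking $\sum_k K_k K_k^* = \one$ gives
\begin{align}
\onesys \otimes \Pi^+ = \sum_k (\onesys \otimes \Pi^+) K_k K_k^* (\onesys \otimes \Pi^+) = \sum_k (K \otimes C_k)(K \otimes C_k)^* = K K^* \otimes G\;,
\end{align}
where $G = \sum_k C_k C_k^* \geqslant \zero$ and $K K^* = U E_x U^*$. Since $Z_x \ne \zero$ we have $\Pi^+ \ne \zero$, so both sides are nonzero and $\tr[G] > 0$. Taking the partial trace over $\aa$ yields $(\tr[\Pi^+])\, \onesys = (\tr[G])\, U E_x U^*$, whence $E_x = \big(\tr[\Pi^+]/\tr[G]\big)\, \onesys$ is proportional to the identity, i.e. trivial. (Tracing instead over $\s$ fixes $G \propto \Pi^+$ and confirms consistency.) This contradicts the assumption that $E_x$ is non-trivial, completing the proof.

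I expect the main obstacle to be the middle step: recognizing that the single-Kraus condition is strong enough to impose the exact block structure $(\onesys \otimes \Pi^+) K_k = K \otimes C_k$ on the \emph{dilating} Kraus operators of $\ee$, rather than merely on $\ii_x$ itself. Care is needed because the pointer effect $Z_x$ need not be full-rank, so the rigidity is only available on the subspace $\supp(Z_x)$; it is precisely the full-rank hypothesis on $\xi$ (guaranteeing every $\lambda_l > 0$) that lets the conclusion range over a complete apparatus basis, and it is the \emph{unitality} of $\ee$—not merely its trace preservation, which by itself only produces the consistent but uninformative normalization $\sum_k \tr[C_k^* Z_x C_k\, \xi] = 1$—that collapses the block identity into the tensor-product equation forcing $E_x \propto \onesys$.
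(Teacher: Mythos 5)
Your proof is correct, and it reaches the conclusion by a genuinely different route from the paper. The paper first proves an intermediate lemma (\lemref{lemma:purity-strictpositive-channel-identity}) at the level of the dual channel: decomposing the full-rank $\xi$ as a convex mixture containing an arbitrary pure state, it uses purity preservation to show that $\Gamma_{|\phi\>\!\<\phi|}\circ\ee^*(\bigcdot\otimes Z_x)$ is independent of $|\phi\>$, and then invokes an external lemma (Lemma I.2 of Ref.~\cite{Mohammady2022a}) to conclude the tensor-product identity $\ee^*(\bigcdot\otimes Z_x)=\ii_x^*(\bigcdot)\otimes\oneapp$; trace preservation of $\ee^*$ (equivalently, unitality of $\ee$) then forces $\tr[\ii_x^*(\rho)]$ to be state-independent, which together with the single-Kraus form from \lemref{lemma:purity-preserving-strictly-positive} makes $E_x$ trivial. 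You instead work at the level of Kraus operators: the uniqueness of Kraus representations up to isometry turns the single-Kraus form of $\ii_x$ into the block identity $(\onesys\otimes\Pi^+)K_k=K\otimes C_k$ for the dilating Kraus operators, and the unital half of bistochasticity collapses this to $\onesys\otimes\Pi^+=KK^*\otimes G$, whence $E_x\propto\onesys$. The two arguments exploit the same three ingredients (full rank of $\xi$ to sweep over a complete apparatus basis, rigidity of single-Kraus operations, and unitality of $\ee$), and indeed your block identity implies the paper's Lemma~\ref{lemma:purity-strictpositive-channel-identity} upon summing over $k$. What your version buys is self-containedness---it replaces the convexity-plus-external-lemma argument with the standard unitary freedom of Kraus decompositions---and it yields the extra structural fact that $G\propto\Pi^+$; what the paper's version buys is a reusable channel-level identity and a shorter final computation. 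Two cosmetic points: your appeal to ``$Z_x\neq\zero$ by our standing convention'' should really be routed through $E_x$ (the convention excludes $E_x=\zero$, and $Z_x=\zero$ would force $E_x=\zero$, which is in any case a trivial effect, so the claim is vacuous there); and you establish strict positivity of $\ii_x$ by a direct trace argument where the paper cites Lemma D.1(iv) of Ref.~\cite{Mohammady2022a}---both are fine.
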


\

It should be clear that an operation compatible with a trivial effect provides no information about the system being measured, since its outcome probability is independent of the input state, and it therefore does not contribute to the informational content of the measurement. In fact, since by \lemref{lemma:instrument-measurement-process-third-law} the third law requires that the operations of the instrument be strictly positive, \lemref{lemma:purity-preserving-strictly-positive} implies that if an operation $\ii_x$ is compatible with a trivial effect $E_x = p(x) \onesys$, then it is purity preserving only if it is proportional to a unitary channel, i.e, $\ii_x(\bigcdot) = p(x) U_x \bigcdot U_x^*$.

Before we prove the above theorem, we shall first prove the following useful lemma:

\

\begin{lemma}\label{lemma:purity-strictpositive-channel-identity}
Let  $\mm\coloneq (\ha, \xi, \ee, \jj)$ be  a measurement process for an instrument $\ii$  acting on $\hs$. Assume that $\xi$ is strictly positive. For any outcome $x$ such that $\ii_x$ is purity-preserving, it holds that
\begin{align*}
    \ee^*(\bigcdot  \otimes  Z_x) = \ii_x^*(\bigcdot ) \otimes \oneapp .
\end{align*}
We recall that $\Z$ is the observable compatible with the apparatus instrument $\jj$.
\end{lemma}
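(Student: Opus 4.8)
The plan is to show that the Heisenberg-picture action of the premeasurement on the ``outcome-$x$ slice'' $A\otimes Z_x$ is completely insensitive to the apparatus input, which is exactly the claimed factorisation. Since the operation $\ii_x$ depends on the objectification only through the pointer effect $Z_x$ and not on $\jj$, I would first replace $\jj$ by the generalized L\"uders instrument $\jj_x^L(\bigcdot)=\sqrt{Z_x}(\bigcdot)\sqrt{Z_x}$ without loss of generality. Feeding in a pure system state $\ket{\psi}\bra{\psi}$ with $p^{\E}_\rho(x)>0$, purity-preservation of $\ii_x$ forces the conditional posterior system state $\sigma\sub{\s}^x$ to be pure. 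Because a bipartite state whose marginal on one factor is pure is necessarily a product state, the joint posterior $\sigma\sub{\s\aa}^x$ must factorise as $\sigma\sub{\s}^x\otimes\sigma\sub{\aa}^x$: after this outcome the system is left uncorrelated with the apparatus. This is the conceptual engine of the lemma.

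To upgrade this single-input observation into the desired operator identity I would exploit strict positivity of $\xi$ via purification. Introduce a reference $\rr$ with $\dim(\rr)=\dim(\ha)$ and a purification $\ket{\Xi}$ of $\xi$ on $\ha\otimes\rr$; because $\xi$ is strictly positive its Schmidt rank is full, so the apparatus states $\tr\sub{\rr}[(\oneapp\otimes R)\ket{\Xi}\bra{\Xi}]$ obtained by varying the reference operator $R$ exhaust all of $\lo(\ha)$. Running $\ee$ and the L\"uders readout on $\ket{\psi}\bra{\psi}\otimes\ket{\Xi}\bra{\Xi}$ and using that a pure system marginal factorises the system off from the \emph{entire} complement $\ha\otimes\rr$, I can commute the reference operation through $\ee$ and $\jj_x^L$ to conclude that the outcome-$x$ operation obtained with an \emph{arbitrary} apparatus input coincides with $\ii_x$ up to a scalar that does not depend on $\ket{\psi}$. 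By linearity this says that $\ee^*(A\otimes Z_x)$ has product form $\ii_x^*(A)\otimes\Theta$ for a single positive operator $\Theta\in\lo(\ha)$, where I use that the family of restriction maps $\{\Gamma_\eta\}_\eta$ separates operators (probing with $\eta=\ket{\mu}\bra{\nu}$ recovers every apparatus block). The system factor is then identified by applying $\Gamma_\xi$ and recalling that $\Gamma_\xi\circ\ee^*(\bigcdot\otimes Z_x)=\ii_x^*$, which fixes it as $\ii_x^*(A)$ and pins the normalisation $\tr[\Theta\,\xi]=1$.

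The main obstacle — and the step that must use full-rankness of $\xi$ in an essential way — is the final identification $\Theta=\oneapp$. The factorisation together with the normalisation $\tr[\Theta\,\xi]=1$ constrains $\Theta$ only to be a positive operator of unit $\xi$-average, so the crux is to argue that the $\ket{\psi}$-independence of the steering scalar, combined with the fact that strict positivity of $\xi$ makes \emph{every} apparatus input reachable by steering, leaves no room for a nontrivial apparatus factor. I expect this last point to be the delicate part of the argument; by contrast, the product structure of $\ee^*(A\otimes Z_x)$ itself should follow cleanly from purity-preservation alone. (A concrete alternative to the purification route, useful as a cross-check, is to insert a spectral decomposition $\xi=\sum_\mu\lambda_\mu\ket{\mu}\bra{\mu}$ and a Kraus form of $\ee$ into the definition of $\ii_x$; purity-preservation together with the structure theorem of Davies collapses all the resulting partial Kraus operators onto a single operator, after which the apparatus blocks of $\ee^*(A\otimes Z_x)$ reduce to $\ii_x^*(A)$ times a fixed positive matrix, reproducing the same factorisation and the same remaining obstacle.)
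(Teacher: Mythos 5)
Your route to the product structure is sound in outline and genuinely different from the paper's. Where you purify $\xi$, use ``pure system marginal $\Rightarrow$ the system decouples from $\ha\otimes\mathcal{R}$'', and steer the apparatus input from the reference side (full Schmidt rank $\leftrightarrow$ strict positivity of $\xi$), the paper works directly on the apparatus: strict positivity gives $\xi\geqslant\lambda|\phi\>\!\<\phi|$ for \emph{every} unit vector $|\phi\>$ and some $\lambda>0$, hence a convex splitting $\ii_x=\lambda\,\ii_x^{\phi}+(1-\lambda)\,\ii_x^{\sigma}$ with $\ii_x^{\phi}(\bigcdot)\coloneq\tra[(\onesys\otimes Z_x)\,\ee(\bigcdot\otimes|\phi\>\!\<\phi|)]$; extremality of pure outputs is then invoked to identify $\ii_x^{\phi}$ with $\ii_x$ for all $|\phi\>$, so that $\Gamma_{\varrho}\circ\ee^*(\bigcdot\otimes Z_x)=\ii_x^*$ for every apparatus state $\varrho$, and a separation lemma for restriction maps (Lemma I.2 of Ref.~\cite{Mohammady2022a}) produces the factor $\oneapp$ in one step, with no purification and no auxiliary operator $\Theta$ to identify. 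Both arguments use strict positivity of $\xi$ in the same essential way (every apparatus input is reachable); yours buys a cleaner physical picture, the paper's avoids the reference system and the normalisation bookkeeping.

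The step you leave open --- upgrading $\tr[\Theta\,\xi]=1$ to $\Theta=\oneapp$ --- is, however, a genuine gap, and your suspicion that it is the delicate point is well founded: purity preservation alone cannot close it. Extremality (equivalently, your ``pure marginal $\Rightarrow$ product'' argument) only forces each conditional operation $\ii_x^{\phi}$ onto the same output \emph{ray} as $\ii_x$; equality of the normalisations, i.e.\ $\phi$-independence of the induced effects $E_x^{\phi}\coloneq\Gamma_{|\phi\>\!\<\phi|}\circ\ee^*(\onesys\otimes Z_x)$, is an additional input. A concrete obstruction: take $\ee=\idchsys\otimes\idchapp$, $\xi=\oneapp/\dim(\ha)$, and $Z_x$ a non-trivial projection; then $\ii_x(\bigcdot)=\tr[Z_x\xi]\,\bigcdot$ is purity-preserving, yet $\ee^*(\bigcdot\otimes Z_x)=\bigcdot\otimes Z_x\neq\ii_x^*(\bigcdot)\otimes\oneapp$, so here $\Theta=Z_x\dim(\ha)/\tr[Z_x]$ is forced away from $\oneapp$. (The same subtlety is hidden in the paper's assertion that $\ii_x(\rho)=\ii_x^{\phi}(\rho)$ rather than merely $\ii_x^{\phi}(\rho)\propto\ii_x(\rho)$.) What both routes actually deliver in general is the weaker product form $\ee^*(\bigcdot\otimes Z_x)=\ii_x^*(\bigcdot)\otimes\Theta_x$ with $\Theta_x\geqslant\zero$ and $\tr[\Theta_x\,\xi]=1$, and this already suffices for \thmref{thm:bistochastic-third-law-purity-preserving-nogo}: bistochasticity of $\ee$ then gives $\tr[\ii_x^*(\rho)]=\tr[Z_x]/\tr[\Theta_x]$ independently of $\rho$, which forces $E_x$ to be trivial. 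So either supply the missing argument pinning $\Theta_x=\oneapp$ (which requires more than purity preservation, e.g.\ using the normalisation constraint $\sum_x\ee^*(\onesys\otimes Z_x)=\onesys\otimes\oneapp$ together with non-triviality of the effects), or prove and use the weaker product form.
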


\begin{proof}
If  $\xi$ is strictly positive, then   for an arbitrary unit vector $\ket{\phi} \in \ha$, there exists a $0< \lambda < 1$ such that $\xi > \lambda |\phi\>\!\<\phi|$. Defining the state $\sigma\coloneq  (\xi - \lambda |\phi\>\!\<\phi|)/ (1 - \lambda)$, we may thus decompose $\xi$ as $\xi = \lambda |\phi\>\!\<\phi| + (1-\lambda) \sigma$. By \eqref{eq:system-instrument}, and  the fact that for any decomposition $\xi = \sum_i q_i \xi_i$ it holds that $\Gamma_\xi(\bigcdot ) =\sum_i q_i \Gamma_{\xi_i}(\bigcdot )$, we have that 
\begin{align*}
    \ii_x^*(\bigcdot ) &= \Gamma_\xi\circ\ee^*(\bigcdot  \otimes  Z_x) \nonumber \\
    & = \lambda \, \Gamma_{|\phi\>\!\<\phi|}\circ\ee^*(\bigcdot  \otimes  Z_x) + (1-\lambda) \Gamma_\sigma\circ\ee^*(\bigcdot  \otimes  Z_x) \\
    & = \lambda \, {\ii_x^\phi}^*(\bigcdot ) + (1-\lambda) {\ii_x^\sigma}^*(\bigcdot ),
\end{align*}
where we have defined ${\ii_x^\phi}^*(\bigcdot )\coloneq  \Gamma_{|\phi\>\!\<\phi|}\circ\ee^*(\bigcdot  \otimes  Z_x)$ and ${\ii_x^\sigma}^*(\bigcdot )\coloneq  \Gamma_{\sigma}\circ\ee^*(\bigcdot  \otimes  Z_x)$. By the trace duality it holds that $\ii_x(\bigcdot ) = \lambda \,  {\ii_x^\phi}(\bigcdot ) + (1-\lambda) {\ii_x^\sigma}(\bigcdot )$. Since $\ii_x$ is assumed to be purity-preserving, then for every pure state $\rho$ on $\hs$ it must hold that $\ii_x(\rho) = \ii_x^\phi(\rho) = \ii_x^\sigma(\rho)$, since if it were otherwise then $\ii_x(\rho)$ would be mixed. By linearity, it follows that $\ii_x(\bigcdot ) = \ii_x^\phi(\bigcdot )$ for all  unit vectors $|\phi\>$ in $\ha$. Therefore, 
\begin{align*}
    \ii_x^*(\bigcdot ) &=  \Gamma_{|\phi\>\!\<\phi|} \circ \ee^*(\bigcdot  \otimes  Z_x)
\end{align*}
must hold for arbitrary unit vectors $\ket{\phi} \in \ha$. Writing an arbitrary state $\varrho = \sum_i q_i |\phi_i\>\!\<\phi_i|$, it follows that  
\begin{align*}
    \ii_x^*(\bigcdot ) &= \sum_i q_i \Gamma_{|\phi_i\>\!\<\phi_i|} \circ \ee^*(\bigcdot  \otimes  Z_x) = \Gamma_\varrho \circ \ee^*(\bigcdot  \otimes  Z_x)
\end{align*}
must hold for any state $\varrho$ on $\ha$. As shown in  Lemma I.2 of Ref. \cite{Mohammady2022a},  for any $A\in \lo(\hs\otimes \ha)$ and $B\in \lo(\hs)$ such that  $B =\Gamma_\varrho(A)$ for any choice of  $\varrho$, it holds that  $A = B \otimes \oneapp$. This completes the proof.
\end{proof}

Now we may prove \thmref{thm:bistochastic-third-law-purity-preserving-nogo}.

\begin{proof}[Proof of \thmref{thm:bistochastic-third-law-purity-preserving-nogo}]
By \defref{def:measurement-process-both-law}, the apparatus state $\xi$ must be strictly positive, and the premeasurement channel $\ee$ must be bistochastic. On the other hand, we recall that the instrument $\ii$ acting on the system depends only on the pointer observable $\Z$ and not on the $\Z$-instrument $\jj$, and we recall that thermodynamic consistency does not limit $\Z$ in any way.  Assume that  for some $x$, the operation $\ii_x$ is purity-preserving. Since $\xi$ is strictly positive,  By \lemref{lemma:purity-strictpositive-channel-identity} it holds that  $\ee^*(\bigcdot  \otimes  Z_x)
= \ii_{x}^*(\bigcdot )\otimes \oneapp$. Since $\ee$ is bistochastic, then $\ee^*$ preserves the trace. Therefore, for every  state $\rho \in \s(\hs)$ it holds that 
\begin{align*}
\tr[\rho \otimes  Z_x] &= \tr[\ee^*(\rho \otimes  Z_x)] =  \tr[\ii_x^*(\rho) \otimes \oneapp]
\end{align*}
and so 
\begin{align*}
 \tr[\ii_x^*(\rho)] = \frac{\tr[  Z_x]}{\dim(\ha)} \qquad \forall \, \rho.
\end{align*}
Since the measurement process is consistent with the third law, then by \lemref{lemma:instrument-measurement-process-third-law}  $\ii_x$ is strictly positive. Since we assume that $\ii_x$ is purity-preserving, then by  \lemref{lemma:purity-preserving-strictly-positive} we may  write $\ii_x^*(\bigcdot ) = \sqrt{E_x} U_x^* \bigcdot  U_x \sqrt{E_x}$, with $U_x$ a unitary operator. It follows that
\begin{align*}
 \tr[U_x E_x U_x^* \rho ] = \frac{\tr[  Z_x]}{\dim(\ha)} \qquad \forall \, \rho.
\end{align*}
Note that $U_x E_x U_x^*$ is an effect, which is trivial if and only if $E_x$ is trivial.  Since the right hand side is independent of  $\rho$, it follows that $E_x$ must be trivial. 
\end{proof}

\thmref{thm:bistochastic-third-law-purity-preserving-nogo} immediately leads to the following corollary:

\

\begin{corollary}
  Let $\E$ be a non-trivial observable on $\hs$, and let $\ii$ be an $\E$-instrument acting on $\hs$. Assume that $\ii$ admits an adiabatically implemented measurement process  $\mm$ that is consistent with both the second and third laws.  The following hold:

  \begin{enumerate}[(i)]
        \item $\ii$ is not a purity-preserving instrument; in particular, it is not an efficient instrument.
      \item Every operation $\ii_x(\bigcdot )$ that is compatible with a non-trivial effect $E_x$  has a minimal Kraus representation with at least two Kraus operators.
      \item There exists some state $\rho$ such that the information gain $I_{\GO}(\ii, \rho)$ is strictly negative. 
       
  \end{enumerate}
\end{corollary}

\subsection{Approximately purity-preserving measurements}

While a thermodynamically consistent, adiabatically implemented measurement process can never realise a strictly purity-preserving instrument, it can nevertheless realise instruments that are arbitrarily close to being purity-preserving.

Let $\tilde \ii$ be an arbitrary instrument. By the standard unitary measurement model, there exists a measurement process $(\ha, \proj{0}, \ee, \jj)$ implementing $\tilde \ii$, where the apparatus is prepared in a pure state $\proj{0}$, the premeasurement channel $\ee$ is unitary, and the objectification instrument $\jj$ is compatible with a projection-valued pointer observable $\Z$. In such a realization, the adiabatic implementation of $\ee$ is thermodynamically consistent, and by \remref{remark:obj-instrument-both-laws} the instrument $\jj$ can always be chosen to be thermodynamically consistent when implemented adiabatically. The only thermodynamically inconsistent element of this process is therefore the adiabatic preparation of the pure apparatus state $\proj{0}$.

Consider instead the apparatus state
\[
\xi = (1-\epsilon)\proj{0} + \epsilon\,\omega ,
\]
where $\omega > \zero$ and $0 < \epsilon < 1$ can be arbitrarily small. Then $\xi$ is strictly positive, and the resulting measurement process $\mm \coloneq (\ha, \xi, \ee, \jj)$ is thermodynamically consistent. It realises the instrument
\[
\ii_x(\bigcdot) = (1-\epsilon)\tilde\ii_x(\bigcdot) + \epsilon\,\Phi_x(\bigcdot) ,
\]
where $\Phi$ is the instrument implemented by $(\ha, \omega, \ee, \jj)$.

Recalling that the trace norm is defined as $\| A \|_1 \coloneq \tr[\sqrt{A^*A}]$, for any state $\rho$ it holds that
\begin{align*}
    \| \tilde\ii_x(\rho) - \ii_x(\rho)\|_1
    &= \epsilon \| \tilde\ii_x(\rho) + \Phi_x(\rho) \|_1 \\
    &\leqslant \epsilon \| \tilde\ii_x(\rho) \|_1 + \epsilon \| \Phi_x(\rho) \|_1 \\
    &\leqslant 2\epsilon \, .
\end{align*}
Therefore, for any prior state $\rho$, the posterior state produced by the instrument $\ii$ is $\epsilon$-close to that produced by $\tilde\ii$. In particular, if $\tilde\ii$ is purity-preserving, then for every pure state $\rho$ the corresponding posterior state of $\ii$ is $\epsilon$-close to a pure state.

\section{Discussion}

We have shown that the level at which the adiabatic boundary is drawn during a quantum measurement---either at the level of the instrument acting on the system or at the level of the measurement process implementing that instrument---has decisive consequences for the thermodynamic realisability of measurements.

A striking feature of our analysis is that the second and third laws of thermodynamics play qualitatively different roles in the two scenarios. When the adiabatic boundary is drawn at the level of the instrument, consistency with the second law is highly restrictive: an instrument is thermodynamically admissible if and only if the Shannon entropy of the outcome probabilities exceeds the Groenewold--Lindblad--Ozawa information gain. By contrast, every instrument admits an adiabatically implemented measurement process that is consistent with the second law. In this sense, the second law constrains what instruments can be implemented adiabatically, but not what instruments can be realised through an adiabatic measurement process.

The situation is reversed for the third law. At the level of the instrument, thermodynamic consistency requires only that at least one operation be strictly positive. However, when the adiabatic boundary is drawn over the full measurement process, consistency with the third law becomes substantially stronger: an instrument admits a thermodynamically consistent adiabatic implementation if and only if all of its operations are strictly positive.

The conjunction of the second and third laws therefore leads to a sharp separation between the two notions of thermodynamic closure. While an observable admits an adiabatically implemented \emph{efficient} instrument that is consistent with both laws if and only if at least one of its effects is strictly positive, no non-trivial observable admits an adiabatically implemented measurement process that is consistent with both laws and realises a quasicomplete instrument, let alone an efficient one.

This result has a clear conceptual implication. If one insists that efficient measurements can be implemented without violating the laws of thermodynamics, then one must abandon the universal validity of the unitary interaction-based indirect measurement model. In other words, one must accept the existence of \textit{thermodynamically closed measurement processes that are not extendable into a larger unitary dynamics}.

Of course, we do not claim that thermodynamics invalidates the indirect measurement model as a mathematical representation of instruments. What our results show is narrower and stronger at the same time: the model cannot be used indiscriminately as a universal physical implementation principle once thermodynamic consistency is imposed at the level of the measurement process. This tension, between information-theoretic universality and thermodynamic admissibility, deserves further study and may bear on broader questions about reductionism and the status of enlarged unitary descriptions in quantum theory~\cite{DAriano2020a,cuffaro,Wallace2024,Chen2024}. We leave a more systematic investigation of this tension for future work.

\begin{acknowledgments}
The authors thank Fereshte Shahbeigi, Cyril Elouard,  and Maximilian P. E. Lock for insightful discussions. M.~H.~M. acknowledges funding provided by the IMPULZ program of the Slovak Academy of Sciences under the Agreement on the Provision of Funds No. IM-2023-79 (OPQUT), as well as from projects  VEGA 2/0164/25 (QUAS) and APVV-22-0570 (DeQHOST). F.~B. acknowledges support from MEXT Quantum Leap Flagship Program (MEXT QLEAP) Grant No.~JPMXS0120319794 and from JSPS KAKENHI, Grants No.~23K03230 and No.~26K00621.
\end{acknowledgments}



\bibliography{Projects-Thermal-Measurement.bib}

\end{document}